\theoremstyle{plain}
\newtheorem{theorem}{Theorem}[section]
\newtheorem{proposition}[theorem]{Proposition}
\newtheorem{lemma}[theorem]{Lemma}
\theoremstyle{definition}
\newtheorem{definition}[theorem]{Definition}
\newtheorem{remark}[theorem]{Remark}
\numberwithin{equation}{section}
\newcommand{\wh}{\widehat}
\newcommand{\wt}{\widetilde}
\newcommand{\EE}{\mathbb{E}}
\newcommand{\RR}{\mathbb{R}}
\begin{document}

\title{Inventory Management for High-Frequency Trading\\ with Imperfect Competition\footnote{The pertinent remarks of two anonymous reviewers are gratefully acknowledged.}}
\date{June 2, 2019}
\author{Sebastian Herrmann\thanks{Department of Mathematics, University of Michigan, email \texttt{sherrma@umich.edu}.}
\and
Johannes Muhle-Karbe\thanks{Department of Mathematics, Imperial College London, email \texttt{j.muhle-karbe@imperial.ac.uk}. Research partially supported by the CFM-Imperial Institute for Quantitative Finance. Parts of this paper were written while this author was visiting the Forschungsinstitut f\"ur Mathematik at ETH Z\"urich.}
\and
Dapeng Shang\thanks{Questrom School of Business, Boston University, email \texttt{dpshang@bu.edu}.}
\and
Chen Yang\thanks{Department of Mathematics, ETH Z\"urich, email \texttt{chen.yang@math.ethz.ch}. Partly supported by the Swiss National Foundation grant SNF $200020\_172815$.}
}\maketitle

\begin{abstract}
We study Nash equilibria for inventory-averse high-frequency traders (HFTs), who trade to exploit information about future price changes. For discrete trading rounds, the HFTs' optimal trading strategies and their equilibrium price impact are described by a system of nonlinear equations; explicit solutions obtain around the high-frequency limit. Unlike in the risk-neutral case, the optimal inventories become mean-reverting and vanish as the number of trading rounds becomes large. In contrast, the HFTs' risk-adjusted profits and the equilibrium price impact converge to their risk-neutral counterparts. Compared to a social-planner solution for cooperative HFTs, Nash competition leads to excess trading, so that marginal transaction taxes in fact decrease market liquidity.
\end{abstract}

\noindent {\bf Keywords:} high-frequency trading; information asymmetry; inventory management; imperfect competition.

\vspace{2mm}
\noindent {\bf AMS  MSC 2010:}  91B24, 91B44, 91B52, 91G10.

\vspace{2mm}
\noindent {\bf JEL Classification:}  G14, G11, C61, C68.

\section{Introduction}

\emph{Information}, \emph{inventories}, and \emph{competition} are crucial elements of high-frequency trading. Indeed, many high-frequency strategies are based on gaining access to proprietary information and exploiting it before it becomes public knowledge. A typical example is ``latency arbitrage'', where high-frequency traders act based on price changes on other exchanges before these are incorporated into the consolidated ``national best bid-offer (NBBO) price''.

These strategies are not based on any longer-term view on the market. Accordingly, the main associated risk is the inventory that is built up along the way, which exposes the trader to adverse price moves. Thus, the natural tradeoff in this context is to exploit the available information as much as possible while simultaneously controlling the associated inventories. Crucially, HFTs do not make these decisions in isolation but in an environment where several competitors try to implement very similar strategies. 

In this paper, we study an equilibrium model with asymmetric information in the spirit of \cite{kyle.85,admati.pfleiderer.88} that allows us to analyze the interplay of these features in a tractable manner. We consider a market where risk-neutral, competitive dealers clear the orders of exogenous noise traders as well as several strategic HFTs. As in the latency-arbitrage trades mentioned above, these HFTs have access to future asset value changes one period before they become public knowledge. They in turn trade in Nash competition to exploit this additional information. Like in the classic literature on inventory costs (see, e.g., \cite{ho.stoll.81, madhavan.smidt.93}) and many more recent papers), their inventory is penalized. For tractability, this is modelled by a quadratic running cost as in \cite{muhlekarbe.webster.17,rosu.16}.

Since our model is based on a discrete informational advantage, we start from the equilibrium in a discrete-time setting and then study its convergence as the trading frequency increases to the high-frequency limit.\footnote{These convergence results provide a new justification for the direct study of models with infinitesimally short trading rounds as in \cite{rosu.16}.} In order to make the discrete-time model stationary, we postpone the terminal time indefinitely and show that a linear equilibrium exists for sufficiently frequent (but discrete) trading. Here, ``linear'' means that the dealers break even by adjusting the publicly known part of the asset value linearly for the net order flow (as in risk-neutral versions of the model, cf.~\cite{kyle.85,admati.pfleiderer.88} and many more recent studies) \emph{and} for the positions accumulated by the HFTs. Since these are not observable in practice, the dealer uses a linear forecast based on market observables. To obtain a consistent equilibrium, we in turn require that the HFTs have no incentive to deviate from the dealers' predictions. The HFTs' optimal strategies then also turn out to be linear in their positions as well as in their signals about future price changes. A key tractability feature of our equilibrium is that for an individual HFT's optimization problem given the equilibrium pricing rule, the other HFTs' predicted inventories are irrelevant; cf.~Remark~\ref{rem:tractability}. 

Our equilibrium is determined by a system of nonlinear algebraic equations. As the discretization parameter tends to zero, all relevant quantities admit asymptotic expansions around their high-frequency limit. Irrespective of the HFTs' inventory aversions, several aspects of this limit correspond to the risk-neutral version of the model studied by \cite{admati.pfleiderer.88}. Indeed, market depth, the HFTs' exploitation of future signals, and their optimal performance all converge to their risk-neutral counterparts. In contrast, the corresponding inventories are reduced to zero as the trading frequency increases. Highly frequent inventory management thereby allows the HFTs to achieve almost the same performance as without risk penalties.

The impact of inventory aversion on market liquidity and welfare becomes visible at the next-to-leading order in our expansions. However, we illustrate through numerical examples that the magnitude of these welfare and liquidity effects is small on the very short time scales relevant for high-frequency trading. In contrast, Nash competition between several HFTs plays a crucial role in our equilibrium like in the risk-neutral equilibrium in \cite{admati.pfleiderer.88}. Indeed, each HFT only internalizes the negative effects of her price impact on herself but not on others. Therefore, each of them trades too aggressively compared to the efficient allocation that would be achieved by coordinating through a central planner. As a consequence, increased competition between HFTs improves market liquidity.

In contrast, transaction taxes have the opposite effect. Indeed, as first observed in a one-period model with risk-neutral HFTs by \cite{subrahmanyam.98}, a (quadratic) transaction tax forces the HFTs to scale back their trading, thereby moving them closer to their collusive optimum.  As succinctly summarized by \cite{subrahmanyam.98}, ``this causes the transaction tax to have a perverse effect: it reduces market liquidity (and increases the adverse price impact faced by informationless traders), but it also reduces informed trader profits''. This is an important mechanism to keep in mind when discussing transaction taxes as a tool to improve market quality by curbing high-frequency trading. To become socially optimal, these negative effects of taxation have to be outweighed by reducing costs for information acquisition~\cite{subrahmanyam.98} or (over-)investment in trading technologies~\cite{biais.al.15} for example. Incorporating such features into the present dynamic equilibrium model is an intriguing but challenging direction for further research.

The present paper builds on the model studied in \cite{muhlekarbe.webster.17}, extending it in two crucial directions. First, we study imperfect Nash competition between several HFTs with possibly heterogeneous risk-aversion parameters,\footnote{Other types of heterogeneity have been considered in the literature. For example, \cite{hirshleifer.al.94} compare two groups of risk-averse HFTs that acquire private information at different rounds in a two-period model. \cite{foster.vis.96} consider heterogeneity in the initial private signals; see also \cite{back.al.00} for a continuous-time version of this model.} rather than focusing on a single informed agent. Second, we solve for a full equilibrium, where the execution price allows the dealers to break even in each trading round rather than only on average over the whole trading interval. We show that in the high-frequency limit, most predictions of the model are robust. This provides some justification for the use of simpler pricing rules as in \cite{muhlekarbe.webster.17} or \cite[Section~4.1]{rosu.16}.

The HFTs in our model repeatedly receive short-lived informational advantages in the sense that all their private signals are revealed after only one trading round as in \cite{admati.pfleiderer.88}. This complements a large body of literature dealing with various kinds of long-lived informational advantages \cite{kyle.85,holden_long-lived_1992, holden_risk_1994, foster.vis.96, back.pedersen.98, back.al.00, chau.vayanos.08, caldentey.stacchetti.10, foucault.al.13}, mostly in risk-neutral settings.
Closer to our work is Ro\c{s}u \cite{rosu.16}, who studies a model of long-lived informational advantage, but also considers a quadratic inventory penalty for informed traders.
To deal with the resulting more involved filtering problems, \cite{rosu.16} directly works in a model with ``infinitesimally short'' times between trades and optimizes in several parametric classes of trading strategies. In contrast, we start from a fully discrete model, solve for the corresponding equilibrium, and then study its convergence to a high-frequency limit.\footnote{In particular, the strategies of our HFTs are determined by an optimal tradeoff between expected profits and inventory costs at any trading frequency. In contrast, the inventory cost vanishes in the infinitesimal model of \cite{rosu.16}, so that both the optimal weight placed on new signals and the rate of inventory management are determined by maximizing expected profits from trades with slower agents.} With our information structure, linear autoregressive strategies indeed turn out to be optimal. However, whereas HFTs in the model of \cite{rosu.16} ``no longer speculate on the long term value, but just pass their inventory to slower traders'', our model predicts that they initially exploit their trading signals in a very similar manner as their risk-neutral counterparts but quickly unwind their positions after their informational edge has evaporated.

This article is organized as follows. In Section~\ref{sec:market}, we introduce the basic ingredients of our market with asymmetric information. Subsequently, in Section~\ref{sec:monopoly}, we set the stage by discussing the benchmark case of a single HFT who monopolizes access to the additional information. (By a change of variables, this can also be interpreted as a collusive equilibrium, where several homogeneous HFTs coordinate by letting a social planner maximize their aggregate welfare.) The full version of our model with imperfect competition between several, possibly inhomogeneous HFTs is in turn studied in Section~\ref{sec:competition}. For better readability, all proofs are delegated to the appendix.

\section{Market}
\label{sec:market}

We consider a market for a risky asset. At each time 
$$t_n=n \Delta t, \quad \Delta t>0, \quad n=1,2,\ldots,$$
a new increment $\Delta S_n$ of its fundamental value is revealed. This can be interpreted as the change of the National Best Bid Offer (NBBO) price aggregated across exchanges. The value increments are independent and identically normally distributed with mean zero and variance $\sigma_S^2\Delta t>0$.\footnote{That is, the high-frequency limit of the fundamental value is a Brownian motion $W^S$ with volatility $\sigma_S$.} The risky asset is traded at times $t_n$, $n=1,2,\ldots$, by three types of market participants:
\begin{enumerate}
\item \emph{Noise traders}, who trade according to their own objectives (e.g., long-term investment or hedging) and have no private information about the next value increment. Their aggregate order size at time $t_n$ is modeled by an exogenous random variable $\Delta K_n$, which is normally distributed with mean zero and variance $\sigma_K^2\Delta t>0$ and independent from the fundamental value of the risky asset.\footnote{That is, the high-frequency limit of the noise trades is a Brownian motion $W^K$ independent of the one driving the fundamental value.}
\item \emph{High-frequency traders (HFTs)}, with initial inventory $L_0 \in \RR$, who already observe the increment $\Delta S_{n}$ of the asset value before trading at time $t_n$ and utilize it to choose their trade $\Delta L_n$ at time $t_n$ to maximize their discounted expected profits penalized for the corresponding squared inventories.\footnote{The more realistic case of \emph{noisy} signals is a challenging direction for future research; see Remark~\ref{rem:noisy signals}.}

\item Risk-neutral, competitive \emph{dealers}, who set an execution price $P_n$ at time $t_n$ at which they can clear the market while achieving zero expected profits conditional on each realization of the net order flow $\Delta Y_n = \Delta K_n+\Delta L_n$ and the past value increments $\lbrace \Delta S_m,m<n \rbrace$. This means that the dealers incorporate information about the fundamental value that has already become public knowledge. Conditioning on the current order flow means that prices are volume-dependent, akin to quotes in a limit-order book.
\end{enumerate}

\section{Equilibrium with Monopolistic Insider}
\label{sec:monopoly}

We first consider the case of a single HFT, who monopolizes the access to the information about the next-period value increment. 

\subsection{Dealers' Pricing Rule}
\label{sec:monopoly:dealer}

The dealers' information set just before the $n$-th trading round consists of the past value increments $\lbrace \Delta S_m, m<n \rbrace$ and the past and current order flow $\lbrace \Delta Y_m, m\leq n\rbrace$, and we denote the corresponding conditional expectation operator by $\wt \EE_n[\cdot]$. They quote an execution price $P_n$ that allows them to break even on average in each trading round:
\begin{align*}
P_n 
&= \wt \EE_n[S_n] =  S_{n-1} + \wt\EE_n[\Delta S_n],\quad n=1,2,\ldots.
\end{align*}
We focus on \emph{linear pricing rules} of the form
\begin{align}
\label{eqn:monopoly:price}
P_n
&= S_{n-1} + \lambda\Delta Y_n + \mu M_{n-1}.
\end{align}
We do \emph{not} assume that the dealers can observe the evolution of the HFT's inventories. Therefore, instead of the actual position, the dealers' pricing rule depends on the \emph{inventory prediction process} $M = (M_n)_{n\geq 0}$ given by $M_0 = L_0$ and
\begin{equation}
\label{eqn:monopoly:inventory prediction process}
\Delta M_n
= \beta \Delta S_n - \varphi M_{n-1}, \quad \mbox{for some $\beta > 0$ and $\varphi \in (0,1]$.}
\end{equation}
The execution price $P_n$ in turn adjusts the fundamental value $S_{n-1}$ that has already been revealed linearly for the dealers' prediction $M_{n-1}$ of the HFT's inventory and for the new order flow $\Delta Y_n$. To obtain a consistent equilibrium, we require that the HFT has no incentive to deviate from the dealers' inventory prediction, so that this estimate indeed is accurate, cf.~Definition~\ref{def:monopoly:equilibrium} below.

\begin{remark}
All four coefficients $\mu,\lambda,\beta,\varphi$ are expected to be positive in equilibrium:
\begin{itemize}
\item The order-flow sensitivity $\lambda$ is similar to ``Kyle's lambda'', in that it allows the dealers to charge higher prices if large order flows indicate that the HFT wants to buy to exploit increases in the asset value.
\item The inventory sensitivity $\mu$ adjusts the execution price for the rebalancing trades the HFT implements to manage her inventory. For instance, a very large inventory makes it attractive for the HFT to sell, so that the same positive order flow is likely to correspond to an even larger value increase that needs to be offset by a higher execution price.
\item The signal sensitivity $\beta$ describes how strongly the HFT is predicted to react to their information about value changes.
\item The inventory sensitivity $\varphi$ determines how fast the HFT is predicted to reduce large (positive or negative) inventories.
\end{itemize}
\end{remark}

\begin{remark}
We assume that the initial inventory prediction is correct, i.e., $M_0 = L_0$. In reality, the dealers do not know the actual initial inventory of the HFT. However, as HFTs are very reluctant to hold inventories overnight, a zero inventory $L_0 =0$ at the start of a trading day is a reasonable assumption in our context. Moreover, as we discuss in Remark~\ref{rem:stability}, if the initial inventory prediction is incorrect, it is optimal for the HFT to gradually reduce that discrepancy.

The case where the initial inventory $L_0$ is random from the point of view of dealers leads to similar difficulties as the case of noisy signals (see Remark~\ref{rem:noisy signals}).
\end{remark}

\subsection{HFT's Optimization}
\label{sec:monopoly:HFT}

The HFT has a discount rate $\rho \Delta t \in (0,1)$ and, as in \cite{rosu.16,muhlekarbe.webster.17}, a holding cost $\gamma\Delta t/2>0$ levied on her squared inventory in the risky asset. (This means that $\rho$ and $\gamma$ are the discount rate and inventory cost per unit time, respectively.) The HFT's information set just before the $n$-th trading round consists of the past and current value increments $\lbrace \Delta S_m, m\leq n\rbrace$, and we denote the corresponding conditional expectation by $\EE_n[\cdot]$.

We next derive the HFT's goal functional. We postpone the terminal time to infinity to obtain a stationary objective. To wit, the HFT optimizes the discounted sum of her expected one-period wealth changes, penalized for holding large inventories, similarly to \cite{garleanu.pedersen.13}. Consider trading round $n$. Prior to trading, the HFT holds $L_{n-1}$ shares of the risky asset (and a cash position) and submits an order of size $\Delta L_n$. This order is executed at the dealers' execution price $P_n$. Since the market price is based on the dealers' inferior information, we assume that the HFT uses her own, more accurate forecast $S_n$ of the fundamental value to evaluate her risky position. Thus, the HFT's wealth change due to trading in round $n$ is
\begin{align*}
L_n S_n - L_{n-1}S_{n-1} - P_n\Delta L_n 
&= L_{n-1}\Delta S_n + (S_n - P_n) \Delta L_n;
\end{align*}
here, $L_{n-1}S_{n-1}$ and $L_nS_n$ are the HFT's valuations of the risky position before and after trading in round $n$, respectively, and $P_n\Delta L_n$ is the change in the cash position due to the trade $\Delta L_n$. Since $\Delta S_n$ and $L_{n-1}$ are independent and $\Delta S_n$ has mean $0$, the wealth change $L_{n-1}\Delta S_n $ corresponding to the shares already held before trading round $n$ has expectation zero, and it suffices to consider the wealth change $(S_n-P_n)\Delta L_n$ corresponding to the purchase or sale of $\Delta L_n$ shares in round $n$.



Suppose that the dealers have chosen a pricing rule $(\lambda,\mu,\beta,\varphi)$ with corresponding inventory prediction process $M$. Then, purchasing $\Delta L_n$ new shares at time $t_n$ costs 
$$P_n \Delta L_n = (S_{n-1}+ \lambda \Delta Y_n +\mu M_{n-1})\Delta L_n.$$
Comparing this to the HFT's valuation $S_n = S_{n-1}+\Delta S_n$ and taking into account that the noise trades are independent with mean zero, it follows that the expected change of the HFT's wealth due to the new trade is $(\Delta S_{n}-\lambda\Delta L_n-\mu M_{n-1})\Delta L_n$. 

Now, complement this with the inventory penalty $\frac{\gamma\Delta t}{2} (L_{n-1}+\Delta L_n)^2$ for trading round $n$ and discount the expected wealth change and inventory penalty at time $t_n$ with the simply compounded discount factor $(1-\rho\Delta t)^{n}$. For an infinite planning horizon, this leads to the following stationary goal functional:
\begin{align}
\label{eqn:monopoly:objective}
\EE_1\bigg[ \sum_{n=1}^\infty (1-\rho\Delta t)^n \Big\lbrace\big( \Delta S_n - \lambda \Delta L_n - \mu M_{n-1}\big) \Delta L_n -\frac{\gamma\Delta t}{2}(L_{n-1}+\Delta L_n)^2 \Big\rbrace \bigg] \to \max_{\Delta L \in \mathcal{A}}!
\end{align}
Here, 
\begin{align*}
\mathcal{A}
&=\Big\{(\Delta L_n)_{n\geq 1}: \EE[L_n^2] \text{~is bounded in $n$} \Big\}
\end{align*}
denotes the set of \emph{admissible strategies}. We note that $\Delta M \in \mathcal{A}$ by Lemma~\ref{lem:admissible:general}. Using this, one can show that the expectation in \eqref{eqn:monopoly:objective} is well defined and finite for all $\Delta L \in \mathcal{A}$.

\subsection{Definition of Equilibrium}
\label{sec:monopoly:equilibrium}

We can now formalize our notion of equilibrium. As already outlined above, the key consistency condition is that the dealers' inventory predictions indeed come true, because the HFT has no incentive to deviate from them:

\begin{definition}
\label{def:monopoly:equilibrium}
Let $(\lambda,\mu,\beta,\varphi)$ be a pricing rule with corresponding inventory prediction process $M$ as in~\eqref{eqn:monopoly:inventory prediction process}. We say that $(\lambda,\mu,\beta,\varphi)$ forms a \emph{(linear) equilibrium} if:
\begin{enumerate}
\item Given the pricing rule $(\lambda,\mu,\beta,\varphi)$, the dealers' inventory prediction $\Delta M$ is an admissible strategy and optimal for the HFT's goal functional \eqref{eqn:monopoly:objective}.
\item Given that the HFT uses the strategy $\Delta M$, the dealers' conditional expected profits in each trading round are zero:
\begin{align}
\label{eqn:def:monopoly:equilibrium}
\wt\EE_n[\Delta S_n]
&= \lambda \Delta Y_n + \mu M_{n-1},\quad n\geq 1.
\end{align}
\end{enumerate}
\end{definition}

A couple of remarks are in order.

\begin{remark}
The coefficients $(\lambda, \mu, \beta, \varphi)$ of a linear equilibrium are time-independent. This is due to two facts:
\begin{enumerate}
\item The HFT's optimization problem has an infinite time horizon. If the HFTs optimization problem had a finite time horizon, optimal strategies (and then also pricing rules) would become time-dependent.

\item The HFT's informational advantage is fully reset after each trading round: the old increment is revealed to the public, the HFT receives a new signal which is uncorrelated with previous observations, and past order flow observations become irrelevant. If, from the dealers' point of view, there was unrevealed randomness in the HFT's inventory (e.g., due to a random initial inventory or noisy signals), then the dealers would use past order flow observations to compute expected value increments and the pricing rule would likely become time-dependent.
\end{enumerate} 
\end{remark}

\begin{remark}
\label{rem:noisy signals}
Recall that in our model, the HFT observes $\Delta S_n$ before trading at $t_n$. It is more realistic to assume that the HFT only receives a noisy (correlated) observation $\wt{\Delta S}_n$ of $\Delta S_n$. In the risk-neutral version \cite{admati.pfleiderer.88} of the model, noisy signals can be incorporated without much difficulty. However, with inventory aversion, challenging problems appear.

Consider the dealers' situation before the $n$-trading round. In view of the zero-profit condition \eqref{eqn:def:monopoly:equilibrium}, the dealers need to compute the conditional expectation $\wt\EE_n[\Delta S_n]$ based on the past and current order flows $\lbrace \Delta Y_m : m \leq n \rbrace$ and past value increments $\lbrace \Delta S_m : m < n\rbrace$. 

In our equilibrium, the dealers can perfectly predict the HFT's inventory in terms of $\lbrace \Delta S_m : m < n\rbrace$. Hence, the dealers can predict the HFT's trade in round $n$ as a function of $\Delta S_n$. Since past order flows $Y_1,\ldots,Y_{n-1}$ are uncorrelated with the next value increment $\Delta S_n$, only $\Delta Y_n$ is relevant for computing the conditional expectation $\wt\EE_n[\Delta S_n]$.

If the HFT receives noisy signals, the situation is different. From the dealers' point of view, the HFT's noisy signals lead to noise in the HFT's inventory (since it is unrealistic to assume that the noisy signals are being revealed to the public after each trading round). Thus, at best, the dealers can predict the HFT's trade as a function of $\Delta S_n$ and $L_{n-1}$. Hence, $Y_n$ is correlated with both $\Delta S_n$ and $L_{n-1}$. But $L_{n-1}$ also depends on all previous noisy signals. Consequently, one expects that the conditional expectation $\wt\EE_n[\Delta S_n]$ depends not only on the current order flow $Y_n$ but also on all past order flows $Y_1,\ldots,Y_{n-1}$. This leads to considerably more complex computations.
\end{remark}

\subsection{Existence and Asymptotics}
\label{sec:monopoly:verification}

If the trading frequency is sufficiently high, a linear equilibrium exists and is unique in the proposed class. All relevant quantities can be expressed in terms of the solution of a quartic equation. In the high-frequency limit, this leads to closed-form approximations by means of the implicit function theorem. The proof of these results is a special case of the more general Theorem~\ref{thm:competition} below, which is established in Appendix~\ref{sec:proofs}.

\begin{theorem}~
\label{thm:monopoly}
\begin{enumerate}
\item For sufficiently small $\Delta t \geq 0$, the quartic equation
\begin{align}
\label{eqn:thm:monopoly:quartic}
0
&= \beta^4 (1-\rho\Delta t)-(2-\rho\Delta t +\beta  \gamma  \Delta t)\left(\frac{\sigma_K}{\sigma_S}\right)^2\beta ^2 +\left(\frac{\sigma_K}{\sigma_S}\right)^4 (1-\beta  \gamma  \Delta t)
\end{align}
has a unique solution $\beta \in (0,\frac{\sigma_K}{\sigma_S}]$. This solution has the following asymptotics as $\Delta t \to 0$:
\begin{align*}
\beta
&= \frac{\sigma_K}{\sigma_S}-\left(\frac{\gamma\sigma_K^3}{2\sigma_S^3}\right)^{1/2}\sqrt{\Delta t} + O(\Delta t).
\end{align*}

\item Define, for sufficiently small $\Delta t \geq 0$,\footnote{Note that $\lambda\beta < 1$, so that $\varphi$ is well defined.}
\begin{align}
\label{eqn:thm:monopoly:lambda}
\lambda
&= \frac{\beta \sigma_S^2}{\sigma_K^2+\beta^2\sigma_S^2}
= \frac{\sigma_S}{2\sigma_K}-\frac{\gamma}{8}\Delta t+O((\Delta t)^{3/2}),\\
\label{eqn:thm:monopoly:phi}
\varphi
&= 1- \frac{\lambda\beta}{1-\lambda\beta}
= \left(\frac{2\gamma\sigma_K}{\sigma_S}\right)^{1/2}\sqrt{\Delta t}+O(\Delta t),\\
\label{eqn:thm:monopoly:mu}
\mu
&= \lambda \varphi
= \left(\frac{\gamma\sigma_S}{2\sigma_K}\right)^{1/2}\sqrt{\Delta t}+O(\Delta t).
\end{align}
Then, for sufficiently small $\Delta t>0$, the pricing rule $(\lambda,\mu,\beta,\varphi)$ forms a linear equilibrium. In particular, the strategy
\begin{align*}
\Delta L_n
&= \Delta M_n
= \beta\Delta S_n - \varphi M_{n-1}
\end{align*}
is optimal for the HFT.\footnote{With this choice of $\Delta L_n$ and because $M_0 = L_0$ by assumption, the dealers' inventory prediction in the equilibrium is correct at all times, i.e., $M_n = L_n$ for all $n$. In particular, the HFT's optimal strategy could equivalently be formulated as $\Delta L_n = \beta \Delta S_n - \varphi L_{n-1}$.}

\item The optimal performance of the HFT at time $0$ is 
\begin{align*}
-\frac{A}{2}L^2_0 + \frac{B}{2}\Delta S_1^2 -C L_0\Delta S_1 + D,
\end{align*}
where
\begin{align*}
A
&= \frac{(1-\rho\Delta t)(1-\varphi)^2}{1-(1-\rho\Delta t)(1-\varphi)^2}\gamma \Delta t
= \frac{\sqrt{2}}{4} \gamma^{1/2}\left(\frac{\sigma_S}{\sigma_K}\right)^{1/2}\sqrt{\Delta t} + O(\Delta t),\\
B
&= (1-\rho\Delta t)\beta  (2(1-\lambda\beta)-\beta(A+ \gamma  \Delta t))
= \frac{\sigma_K}{\sigma_S} - \frac{\sqrt{2}}{4} \gamma^{1/2} \left(\frac{\sigma_K}{\sigma_S}\right)^{3/2}\sqrt{\Delta t} + O(\Delta t),\\
C
&=(1-\rho\Delta t)( \beta  (1-\varphi) (A+\gamma  \Delta t)+\varphi (1- \lambda\beta))
= \frac{3\sqrt{2}}{4} \gamma^{1/2} \left(\frac{\sigma_K}{\sigma_S}\right)^{1/2} \sqrt{\Delta t} + O(\Delta t),\\
D
&= \frac{(1-\rho\Delta t)B \sigma_S^2}{2\rho}
= \frac{\sigma_S\sigma_K}{2\rho} -\frac{\sqrt{2}}{8}\gamma^{1/2} \frac{\sigma_S^{1/2}\sigma_K^{3/2}}{\rho}\sqrt{\Delta t} + O(\Delta t).
\end{align*}
\end{enumerate}
\end{theorem}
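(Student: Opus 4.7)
I would handle the three parts jointly, since the formulas for $\lambda,\mu,\varphi$ and the quartic for $\beta$ arise from matching the dealer's zero-profit condition against the first-order condition of the HFT's Bellman equation; parts~(i) and~(iii) then follow by Taylor expansion and a standard verification argument. The dealer's side is fastest. Under the candidate strategy $\Delta L_n=\beta\Delta S_n-\varphi M_{n-1}$ the prediction $M_{n-1}$ is $\sigma(\Delta S_1,\ldots,\Delta S_{n-1})$-measurable, so conditional on the dealer's information the pair $(\Delta S_n,\Delta Y_n+\varphi M_{n-1})=(\Delta S_n,\Delta K_n+\beta\Delta S_n)$ is jointly Gaussian with mean zero, and the scalar projection yields
\[
\wt\EE_n[\Delta S_n]=\frac{\beta\sigma_S^2}{\sigma_K^2+\beta^2\sigma_S^2}\bigl(\Delta Y_n+\varphi M_{n-1}\bigr).
\]
Matching this against~\eqref{eqn:def:monopoly:equilibrium} gives~\eqref{eqn:thm:monopoly:lambda} and the identity $\mu=\lambda\varphi$.

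For the HFT I would take the Markovian state $(L,M,\Delta S)$---her own inventory, the publicly determined prediction, and the freshly revealed signal---and make a quadratic ansatz for $V(L,M,\Delta S)$ in which the odd terms vanish by the $(L,M,\Delta S)\mapsto(-L,-M,-\Delta S)$ symmetry of the reward and dynamics. The Bellman equation
\begin{align*}
V(L,M,\Delta S)&=\max_{\Delta L}\Bigl\{(\Delta S-\lambda\Delta L-\mu M)\Delta L-\tfrac{\gamma\Delta t}{2}(L+\Delta L)^2\\
&\qquad{}+(1-\rho\Delta t)\,\EE\bigl[V\bigl(L+\Delta L,(1-\varphi)M+\beta\Delta S,\Delta S'\bigr)\bigr]\Bigr\}
\end{align*}
is quadratic in $\Delta L$, so its FOC is a single linear equation whose solution is linear in $(L,M,\Delta S)$. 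Equating the coefficients of this optimiser with those of the prescribed rule $\beta\Delta S-\varphi M$ and using the dealer relations determines the quadratic coefficients of $V$---in particular $A,B,C$ in closed form in terms of $\beta$---while also yielding the identity $\varphi=1-\lambda\beta/(1-\lambda\beta)$ of~\eqref{eqn:thm:monopoly:phi}. A (bulky but routine) elimination then collapses the remaining consistency equation to the single quartic~\eqref{eqn:thm:monopoly:quartic}. The constant $D$ is pinned down separately by taking expectation over $\Delta S'$ in the Bellman equation, which produces the stationary relation $D=(1-\rho\Delta t)\bigl(D+\tfrac{B}{2}\sigma_S^2\Delta t\bigr)$ and hence the formula in part~(iii).

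At $\Delta t=0$ the quartic collapses to $(\beta^2-(\sigma_K/\sigma_S)^2)^2=0$, so $\beta=\sigma_K/\sigma_S$ is a \emph{double} root and the implicit function theorem does not apply. I would instead make a Puiseux ansatz $\beta=\sigma_K/\sigma_S+c\sqrt{\Delta t}+O(\Delta t)$, match the $O(\Delta t)$ terms of the quartic, and read off $c=-(\gamma\sigma_K^3/(2\sigma_S^3))^{1/2}$, the negative sign being dictated by the requirement $\beta\in(0,\sigma_K/\sigma_S]$. Uniqueness in this interval for small $\Delta t$ follows from the fact that the two simple roots born from the double root at $\sigma_K/\sigma_S$ are symmetric about it (one below, one above) while the two roots near $-\sigma_K/\sigma_S$ stay negative. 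Taylor-expanding \eqref{eqn:thm:monopoly:lambda}--\eqref{eqn:thm:monopoly:mu} and the closed-form expressions for $A,B,C,D$ in $\beta$ then produces the stated asymptotics. Verification is routine: admissibility of $\Delta M$ is supplied by the cited admissibility lemma, while the geometric factor $(1-\rho\Delta t)^n$ combined with the uniform $L^2$ bound built into $\mathcal{A}$ justifies the interchange of sum and expectation in a telescoping comparison argument showing that the quadratic $V$ dominates any admissible strategy and is attained by $\Delta M$; the formula in~(iii) is then just $V(L_0,L_0,\Delta S_1)$, the $M$-dependence being absorbed into the $L$-coefficients because $M_0=L_0$. \emph{The main obstacle} I anticipate is the algebraic elimination that actually reduces the coefficient equations to the single quartic~\eqref{eqn:thm:monopoly:quartic}: the coupling of $L$ and $M$ through the execution price and the inventory penalty produces lengthy intermediate expressions, and the double root at $\Delta t=0$ then forces the non-analytic expansion and requires careful branch selection.
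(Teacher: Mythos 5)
Your route is essentially the paper's: Theorem~\ref{thm:monopoly} is proved there as the $k=1$ case of Theorem~\ref{thm:competition}, with the dealer side handled by exactly your bivariate-Gaussian projection (Lemma~\ref{lem:zero profit}), the HFT side by a quadratic value function, a Bellman/first-order-condition computation and a telescoping verification argument (Lemma~\ref{lem:DPE}), and the degenerate double root of \eqref{eqn:thm:monopoly:quartic} treated by the rescaling $\beta=\sigma_K/\sigma_S+y\sqrt{\Delta t}$ followed by the implicit function theorem applied to the rescaled equation, which is the rigorous version of your Puiseux ansatz and also delivers the branch selection and local uniqueness you argue by root counting.

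One step needs more care than you give it. The maximizer of your Bellman equation is a linear function of \emph{all three} state variables and is not identically $\beta\Delta S-\varphi M$: in equilibrium it equals $\beta\Delta S-\varphi M-\zeta(L-M)$ for some $\zeta\in(0,1)$ (this is the content of Lemma~\ref{lem:DPE}, where $\zeta$ solves \eqref{eqn:lem:DPE:E and zeta} and is strictly positive). Literally ``equating the coefficients of this optimiser with those of the prescribed rule'' therefore over-determines the system --- you would be forcing $\zeta=0$, which is incompatible with the fixed-point equations for the quadratic coefficients of $V$ whenever $\gamma\Delta t>0$. What consistency actually requires is only that the optimizer agree with $\Delta M$ on the set $\lbrace L=M\rbrace$ and that this set be invariant under the optimal dynamics; the paper isolates this by the substitution $Z=L-M$, under which the optimal control is $\Delta Z=-\zeta Z$ with $Z_0=0$. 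Your closing remark about the $M$-dependence being ``absorbed'' because $M_0=L_0$ suggests you see the point, but the coefficient matching must be organized around this one-dimensional reduction (or an equivalent device) or the elimination to the quartic will not close. A second, harmless, slip: for $\Delta t>0$ the pair of roots emanating from the double root at $-\sigma_K/\sigma_S$ does not ``stay negative'' --- the quartic has precisely two distinct real roots, one in $(0,\sigma_K/\sigma_S)$ and one in $(\sigma_K/\sigma_S,\infty)$, the remaining pair being complex; either way those roots lie outside $(0,\sigma_K/\sigma_S]$, so your uniqueness count survives.
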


\begin{remark}
The linear equilibrium from Theorem~\ref{thm:monopoly} is unique for sufficiently small $\Delta t > 0$. We outline the main steps of the argument.

Let $(\lambda,\mu,\beta,\varphi)$ be a linear equilibrium. The dealers' zero profit condition necessitates that $\lambda$ and $\mu$ are related to $\beta$ and $\varphi$ by \eqref{eqn:thm:monopoly:lambda} and \eqref{eqn:thm:monopoly:mu}, respectively. Next, one can use either dynamic programming or perturbation arguments to analyze the optimality of the strategy $(\Delta M_n)_{n\geq1}$ for the HFT, i.e., that the HFT does not have any incentive to deviate from the dealers' prediction. This yields that $\varphi$ is given in terms of $\lambda$ and $\beta$ by \eqref{eqn:thm:monopoly:phi} and that $\beta$ satisfies the quartic equation \eqref{eqn:thm:monopoly:quartic}. It thus remains to analyze which solutions to the quartic equation produce an equilibrium.

It turns out that the quartic equation \eqref{eqn:thm:monopoly:quartic} has precisely two distinct real roots. The first root lies in  $(0,\frac{\sigma_K}{\sigma_S})$ and leads to the equilibrium of Theorem~\ref{thm:monopoly}. The second root lies in $(\frac{\sigma_K}{\sigma_S},\infty)$. However, $\beta > \frac{\sigma_K}{\sigma_S}$ implies that the corresponding mean-reversion parameter $\varphi$ is negative. Hence, by Lemma~\ref{lem:admissible:general}, the corresponding strategy $(\Delta M_n)_{n\geq 1}$ is not admissible. Moreover, this strategy would also not be optimal in any larger admissibility class. Indeed, the performance of the strategy $(\Delta M_n)_{n\geq 1}$ can be computed explicitly in terms of $\beta$, $\varphi$, and $\lambda$ using the explicit representation for $M_n$ from the proof of Lemma~\ref{lem:admissible:general}. Then, it turns out that for $\varphi < 0$ and sufficiently small $\Delta t>0$, the inventory penalty dominates the expected profits and leads to the performance $-\infty$.
\end{remark}

\subsection{Comparative Statics}

We now discuss the optimal trading strategy, pricing rule, and optimal HFT performance for the equilibrium of Theorem~\ref{thm:monopoly}. The high-frequency limit $\Delta t \to 0$ of the equilibrium coincides with its risk-neutral counterpart~\cite{admati.pfleiderer.88}; the effects of inventory aversion only become visible in the leading-order correction terms for $\Delta t>0$. Indeed, in this limit, the optimal trading strategy $\Delta L_n = \frac{\sigma_K}{\sigma_S} \Delta S_n$ and the linear pricing rule $P_n = S_{n-1} +\frac{\sigma_S}{2\sigma_K} \Delta Y_n$ coincide with the risk-neutral equilibrium of~\cite{admati.pfleiderer.88}.

For the rest of this section, we turn to the case of sufficiently small, but positive $\Delta t$, where the impact of the inventory aversion becomes visible. For the numerical illustrations in Figures~\ref{fig:monopoly:strategy}--\ref{fig:monopoly:pricing rule}, we use the volatilities $\sigma_S=\sigma_K=1$ as in \cite{subrahmanyam.98}, the inventory aversion parameter $\gamma = 1$, which is of the same order of magnitude as the parameter values used in \cite{rosu.16}, and the discount rate $\rho=5\%$; note, however, that  as in the high-frequency limits, almost all results are virtually independent of $\rho$ in any case. (The only exception is the optimal performance $D$, for which both the limit and the leading-order correction are inversely proportional to the discount rate.)

\paragraph{Trading strategy.} 
The HFT's equilibrium position $(L_n)_{n\geq0}$ follows an autoregressive process of order one, where the mean-reversion speed is governed by the parameter $\varphi$ and the innovations are given by $\beta$ times the value increments $\Delta S_n$. The mean-reversion component ensures that the HFT's position does not become too large (positive or negative). Note that this inventory management term does \emph{not} scale with $\Delta t$ as for a discretized Ornstein--Uhlenbeck process. Instead, it scales with $\sqrt{\Delta t}$ so that the half life of the HFT's position converges to zero as the trading frequency increases. This allows the HFT to achieve the same performance as without inventory aversion in the high-frequency limit.

\begin{figure}
\begin{center}
\includegraphics[width=0.45\textwidth]{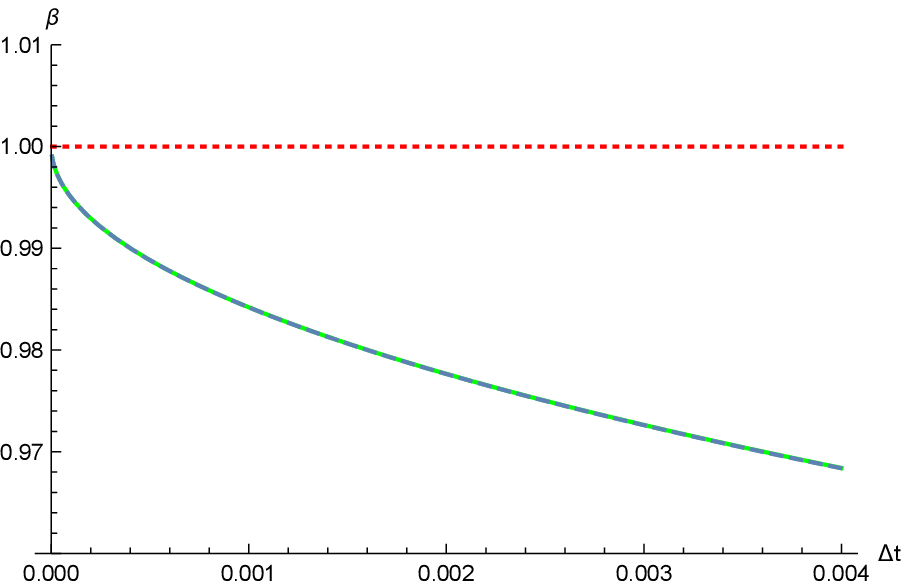}
\includegraphics[width=0.45\textwidth]{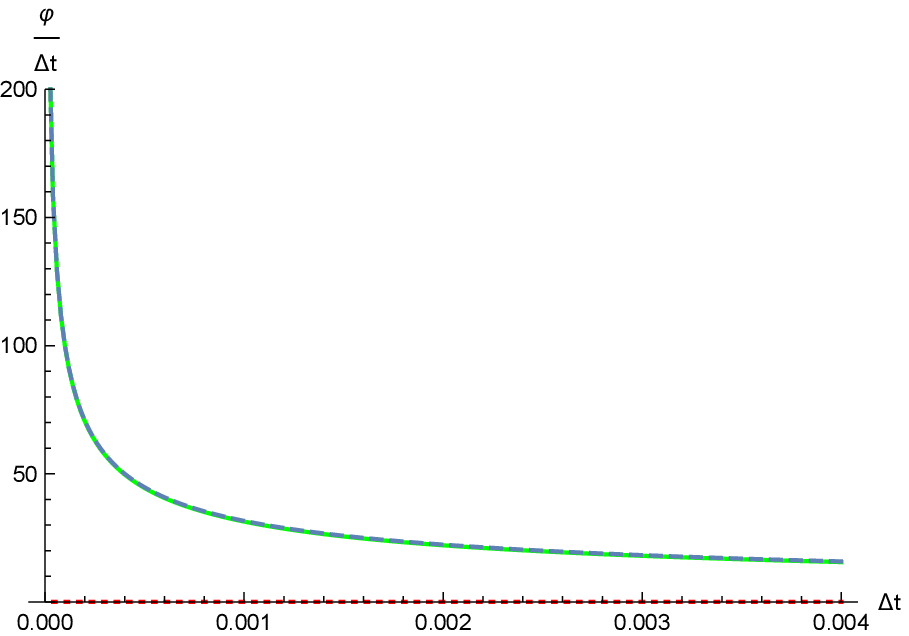}
\caption{\label{fig:monopoly:strategy}
Trading weight $\beta$ for new signals (left panel) and mean reversion speed $\varphi/\Delta t$ (right panel) plotted against the discretization parameter $\Delta t$: high-frequency limit (dotted), first-order approximation (dashed), and exact solution (solid). Note that $0.004 = 1/250$ is approximately one trading day.}
\end{center}
\end{figure}

Figure~\ref{fig:monopoly:strategy} displays the equilibrium signal sensitivity $\beta$ and the mean-reversion rate $\varphi/\Delta t$ as a function of the discretization parameter $\Delta t$. We see that $\beta$ is (in relative terms) close to its high-frequency limit even for rather large values of the discretization parameter $\Delta t$. In contrast, the optimal mean-reversion rate is quite sensitive to $\Delta t$ and explodes quickly as the trading frequency grows. Nevertheless, the first-order expansion for $\varphi$ in Theorem~\ref{thm:monopoly} provides an excellent approximation for the optimal mean-reversion rate.

\paragraph{HFT's optimal performance.}
Unsurprisingly, inventory aversion has a negative effect on the quantity $D$, which measures the optimal performance the HFT can achieve starting from a zero position and no initial signal. The optimal performance is plotted against the discretization parameter $\Delta t$ in Figure~\ref{fig:monopoly:performance}. We observe that the inventory effect is clearly visible for intermediate trading frequencies, but quickly dwindles below 1\% for trading frequencies higher than 100 times per day. At the much shorter time scales corresponding to high frequency trading, the (risk-neutral) high-frequency limit studied directly in \cite{rosu.16} is clearly an excellent approximation.

\begin{figure}
\begin{center}
\includegraphics[width=0.45\textwidth]{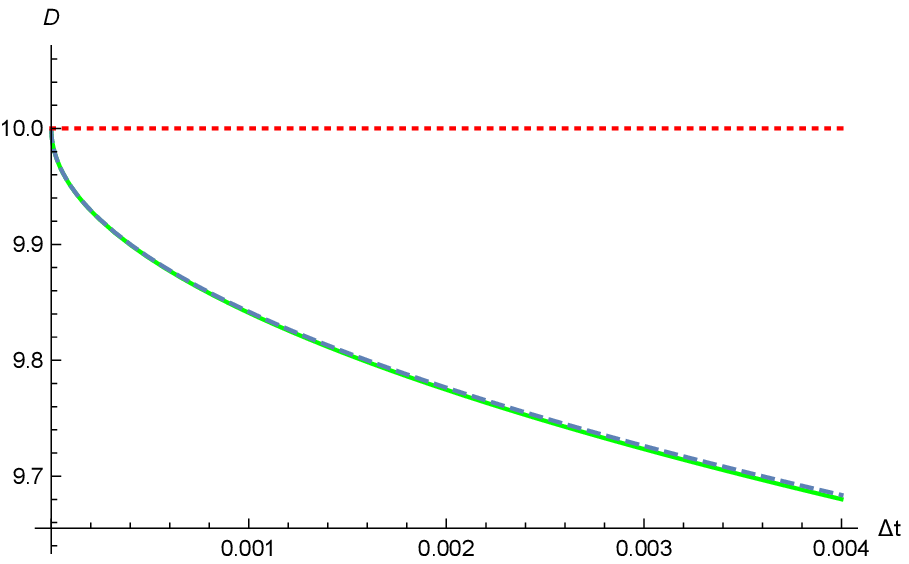}
\includegraphics[width=0.45\textwidth]{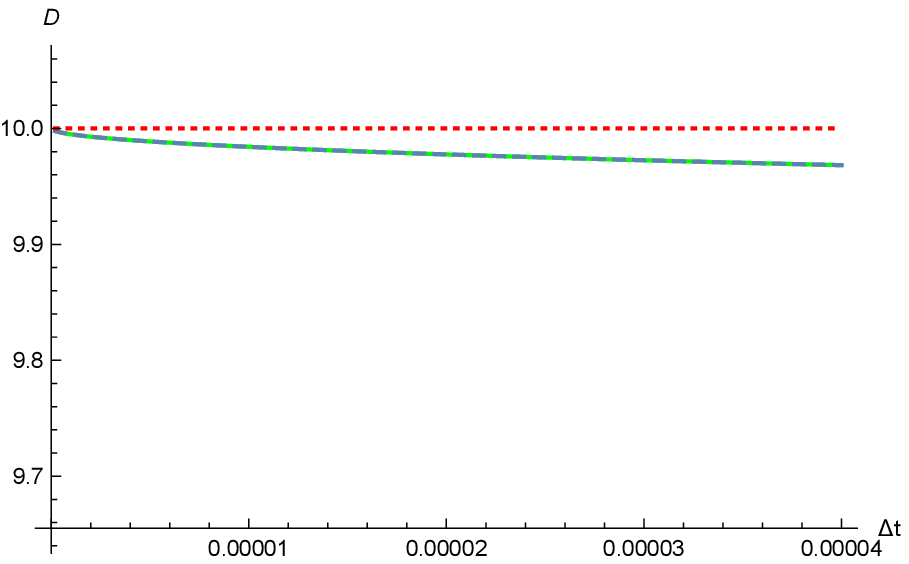}
\caption{\label{fig:monopoly:performance}
HFTs optimal performance $D$ plotted against the discretization parameter $\Delta t$: high-frequency limit (dotted), first-order approximation (dashed), and exact solution (solid). The left panel shows intermediate trading frequencies lower than once per day. The right panel zooms in on higher trading frequencies of more than 100 trades per day. }
\end{center}
\end{figure}

\paragraph{Pricing rule.}
The equilibrium pricing rule adjusts the already revealed part $S_{n-1}$ of the asset's fundamental value linearly for the net order flow $\Delta Y_n$ in trading round $n$ and the HFT's predicted inventory $M_{n-1}$ (cf.~\eqref{eqn:monopoly:price}). At the leading-order, the price impact parameter $\lambda$ is decreasing in the inventory aversion parameter $\gamma$: the HFT's inventory management interferes with the optimal exploitation of the informational advantage that is possible in the risk-neutral case. This in turn allows the dealers to break even on average with a smaller price impact parameter. The equilibrium sensitivity $\mu = \lambda\varphi$ of the execution price with respect to the HFT's predicted inventory is increasing in the inventory aversion parameter $\gamma$ at the order $O(\sqrt{\Delta t})$. The reason is that, for a given predicted inventory (say, positive), a larger risk aversion parameter $\gamma$ implies that the HFT has a stronger incentive to sell. Therefore, the same net order flow is likely to correspond to an even larger buying incentive that needs to be offset by a higher execution price. Finally, observe that $\mu = \lambda\varphi$ leads to an equilibrium execution price 
\begin{align*}
P_n
&= S_{n-1} + \lambda \Delta Y_n + \mu M_{n-1}
= S_{n-1} + \lambda (\Delta K_n + \beta \Delta S_n)
\end{align*}
that does not (functionally) depend on the HFT's predicted inventory.

\begin{figure}
\begin{center}
\includegraphics[width=0.45\textwidth]{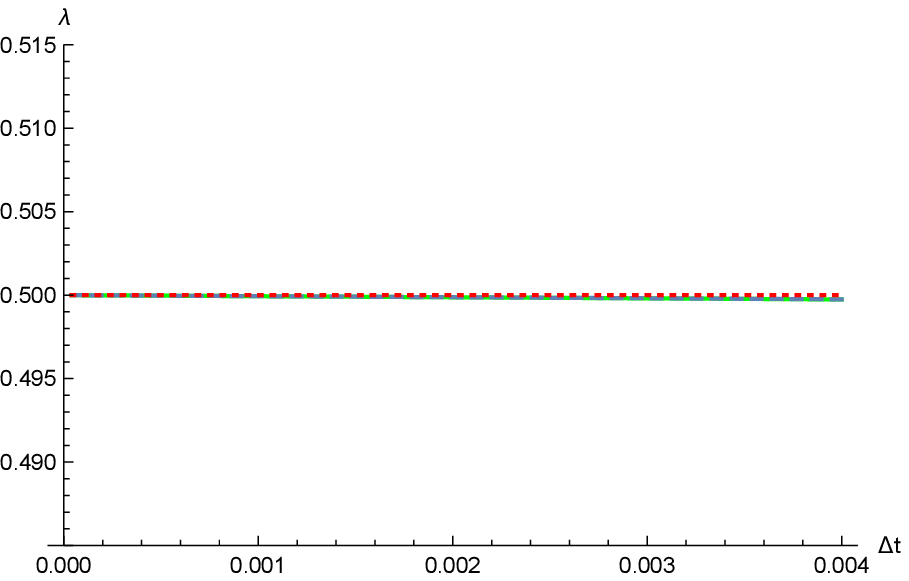}
\includegraphics[width=0.45\textwidth]{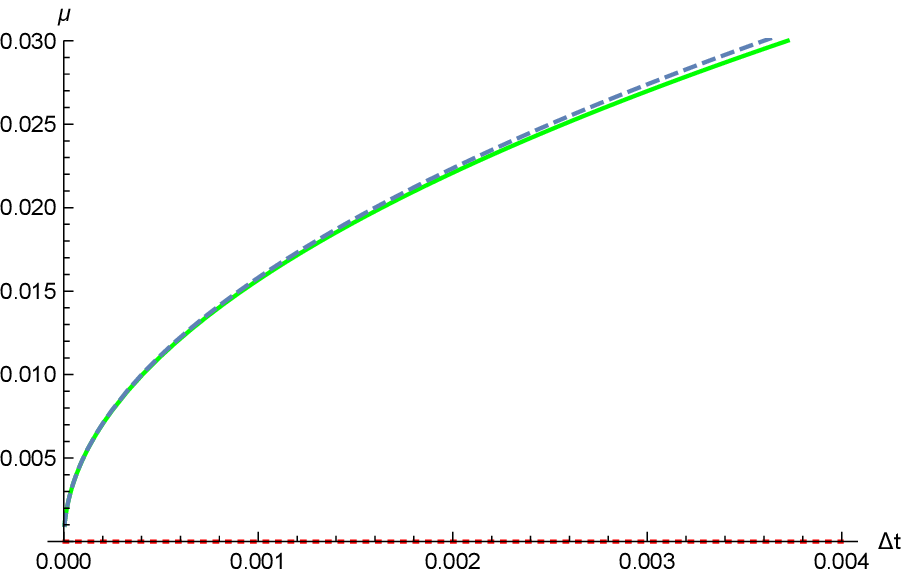}
\caption{\label{fig:monopoly:pricing rule}
Equilibrium pricing rule parameters $\lambda$ (left panel) and $\mu$ (right panel) plotted against the discretization parameter $\Delta t$: high-frequency limit (dotted), first-order approximation (dashed), and exact solution (solid).}
\end{center}
\end{figure}

Figure~\ref{fig:monopoly:pricing rule} shows the parameters $\lambda$ and $\mu$ describing the equilibrium pricing rule as a function of $\Delta t$. For the sensitivity $\lambda$ of the pricing rule with respect to the net order flow, we find that the high-frequency limit, its asymptotic expansion up to terms of order $\Delta t$, and the exact solution of the quartic equation from Theorem~\ref{thm:monopoly} virtually coincide even for trading frequencies as low as once per trading day. This demonstrates that the formula for Kyle's lambda \cite{kyle.85} is very robust with respect to the introduction of inventory aversion. In particular, the apparent welfare differences for noise traders in the present model and in the partial equilibrium setting of \cite{muhlekarbe.webster.17} are negligible in practice. The sensitivity $\mu$ of the execution price with respect to the HFT's predicted inventory is nonzero, unlike its high-frequency limit, but the first-order expansion from Theorem~\ref{thm:monopoly} again provides an excellent approximation even for low trading frequencies.

\begin{remark}
Let us compare our results with the model of \cite{muhlekarbe.webster.17}, who focus on a pricing rule with $\lambda>0$ but $\mu=0$. Accordingly, in their model, the dealers cannot break even in each trading round, but only over the entire time horizon. Despite these differences, the high-frequency limits of $\beta,\lambda$, and $\varphi$ are the same in both models. This provides some justification for focusing on simple pricing rules that do not depend on the HFTs' positions as in \cite{rosu.16,muhlekarbe.webster.17}.

At the next-to-leading order, the effect of the dealers' position-dependent pricing rule becomes visible. Translated into our notation, \cite{muhlekarbe.webster.17} obtains the following asymptotics:
\begin{align*}
\beta
&= \frac{\sigma_K}{\sigma_S} - \frac{3}{4} \left(\frac{\gamma\sigma_K^3}{\sigma_S^3}\right)^{1/2} \sqrt{\Delta t} + O(\Delta t),\\
\lambda
&=  \frac{\sigma_S}{2\sigma_K} - \frac{1}{8}\left(\frac{\gamma\sigma_S}{\sigma_K}\right)^{1/2}\sqrt{\Delta t} + O(\Delta t),\\
\varphi
&= \left(\frac{\gamma\sigma_K}{\sigma_S}\right)^{1/2} \sqrt{\Delta t} + O(\Delta t).
\end{align*}

Both in the partial equilibrium setting of \cite{muhlekarbe.webster.17} and in the present full equilibrium, dealers can increase market depth (i.e., decrease $\lambda$) with an inventory-averse HFT since these have to balance the exploitation of their informational advantage against inventory management. However, the increase in liquidity in our model is asymptotically smaller than in \cite{muhlekarbe.webster.17} ($O(\Delta t)$ in \eqref{eqn:thm:monopoly:lambda} compared to $O(\sqrt{\Delta t})$ in \cite{muhlekarbe.webster.17}). Accordingly, the expected losses of noise traders in our model are comparatively larger. In this sense, noise traders pay the price for enforcing the dealers' zero profit condition in each trading round rather than just over the entire time horizon. 

For the signal sensitivity $\beta$ and the mean-reversion speed $\varphi$, the first-order corrections are of the order $O(\sqrt{\Delta t})$ in both models, but the coefficients of the corresponding correction terms differ. Since the sensitivity to order flow is lower, the HFT exploits her signal about future price changes (slightly) more aggressively than in~\cite{muhlekarbe.webster.17}, in that the $\beta$ coefficient is reduced less compared to the high-frequency limit. As the position-dependent execution prices are higher when the HFT's position is larger, the HFT can also employ a mean-reversion speed $\varphi$ that is asymptotically larger by a factor of $\sqrt{2}$ in our model compared to \cite{muhlekarbe.webster.17}.
\end{remark}

\section{Equilibrium with Competing Insiders}
\label{sec:competition}

\subsection{Nash Competition}

We now turn to the case of $k>1$ possibly heterogeneous HFTs, who compete to exploit their common information about the next-period value increment. Each HFT $i$ has some initial inventory $L_0^i \in \RR$, an individual holding cost $\gamma_i\Delta t/2 >0$ levied on its squared inventory in the risky asset, and an individual discount rate $\rho_i\Delta t \in (0,1)$.

The dealers now clear the market consisting of the $k$ HFTs and the noise traders. They use a separate inventory prediction process for each HFT: for each $i$, the process $M^i = (M^i_n)_{n\geq 0}$ is defined by $M^i_0 = L^i_0$ and
\begin{align}
\label{eqn:competition:inventory prediction process}
\Delta M^i_n
&= \beta_i \Delta S_n - \varphi_iM^i_{n-1},
\end{align}
for some $\vec\beta = (\beta_1,\ldots,\beta_k) \in \RR^k$ and $\vec\varphi=(\varphi_1,\ldots,\varphi_k)\in (0,1]^k$. Consequently, the dealers' linear pricing rule now is of the form
\begin{align*}
P_n
&= S_{n-1} + \lambda\Delta Y_n + \sum_{j=1}^k \mu_j M^j_{n-1},
\end{align*}
for some $\lambda \in \RR$ and $\vec\mu=(\mu_1,\ldots,\mu_k)\in\RR^k$.

As before, risk-neutral HFTs maximize the expectation of their one-period wealth changes. With several HFTs, the cost of purchasing $\Delta L^i_n$ new shares at time $t_n$ for HFT $i$ is
\begin{align*}
P_n \Delta L^i_n
&= \Big(S_{n-1}+\lambda \Delta Y_n + \sum_{j=1}^{k}\mu_j M^j_{n-1}\Big)\Delta L^i_n.
\end{align*}
Here, $\Delta Y_n=\Delta K_n+\sum_{j=1}^k \Delta L^j_n$ is the net order flow in trading round $n$. Comparing this execution price to the HFTs' valuation $S_n$ and taking into account that the noise trades $\Delta K_n$ are independent with mean zero, it follows that, given the other HFTs trade $\Delta L^j_n$, $j \neq i$, the expected change of HFT $i$'s wealth due to her new trade $\Delta L^i_n$ is 
\begin{align*}
\Big(\Delta S_{n}-\lambda\sum_{j=1}^k \Delta L^j_n-\sum_{j=1}^{k}\mu_j M^j_{n-1}\Big)\Delta L^i_n.
\end{align*}

Now, complement this with the inventory penalty, discount, and send the terminal time to infinity. This in turn leads to the following stationary goal functional for HFT $i$:
\begin{align}
\label{eqn:competition:objective}
\EE_1\bigg[ \sum_{n=1}^\infty (1-\rho_i\Delta t)^n \Big\lbrace\Big(\Delta S_n-\lambda\sum_{j=1}^k \Delta L^j_n-\sum_{j=1}^k \mu_j M^j_{n-1}\Big)\Delta L^i_n-\frac{\gamma_i\Delta t}{2}(L^i_{n-1}+\Delta L^i_n)^2\Big\rbrace\bigg],
\end{align}
where $\Delta L^j$, $j\neq i$, are fixed admissible strategies and the optimization runs over $\Delta L^i \in \mathcal{A}$. Our goal is to find an equilibrium in the following sense:

\begin{definition}
\label{def:competition:equilibrium}
Let $(\lambda,\vec\mu,\vec\beta,\vec\varphi)$ be a pricing rule with corresponding inventory prediction processes $M^1,\ldots,M^k$ as in~\eqref{eqn:competition:inventory prediction process}. We say that $(\lambda,\vec\mu,\vec\beta,\vec\varphi)$ forms a \emph{(linear) equilibrium} if:
\begin{enumerate}
\item Given the pricing rule $(\lambda,\vec\mu,\vec\beta,\vec\varphi)$, the strategies $(\Delta M^1,\ldots,\Delta M^k)$ are admissible and form a Nash equilibrium. That is, each HFT $i$ cannot improve her performance by deviating from the strategy $\Delta M^i$ while the other HFTs' strategies $\Delta M^j$, $j\neq i$, remain unchanged.

\item Given that the HFTs use the strategies $(\Delta M^1,\ldots,\Delta M^k)$, the dealers' conditional expected profits in each trading round are zero:
\begin{align}
\label{eqn:def:competition:equilibrium:zero profit condition} 
\wt \EE_n[\Delta S_n]
&= \lambda \Delta Y_n + \sum_{j=1}^k \mu_j M^j_{n-1},\quad n\geq 1,
\end{align}
where $\Delta Y_n = \Delta K_n + \sum_{j=1}^k \Delta M^j_{n}$ is the net order flow in trading round $n$.
\end{enumerate}
\end{definition}

\subsection{Existence and Asymptotics}

The following result identifies a Nash equilibrium for $k$ competing HFTs through a system of nonlinear equations:

\begin{theorem}[Equilibrium]~
\label{thm:competition}
\begin{enumerate}
\item For sufficiently small $\Delta t \geq 0$, the constrained system
\begin{align}
\label{eqn:thm:competition:system:1}
\beta_\Sigma
&= \sum_{j=1}^k \beta_j,\\
\label{eqn:thm:competition:system:2}
0
&= (1-\rho_i\Delta t) \beta_i^2 \beta_\Sigma^2 -(2-\rho_i \Delta  t + \beta_\Sigma  \gamma_i \Delta t)\Big(\frac{\sigma_K}{\sigma_S}\Big)^2  \beta_i \beta_\Sigma + \Big(\frac{\sigma_K}{\sigma_S}\Big)^4 (1-\beta_i \gamma_i \Delta t), \quad i=1,\ldots,k,\\
\label{eqn:thm:competition:system:3}
0
&< \beta_\Sigma,\quad\text{and}\quad
0
< \beta_i \beta_\Sigma \leq \frac{\sigma_K^2}{\sigma_S^2},\quad i=1,\ldots,k,
\end{align}
has a unique solution $(\beta_\Sigma,\vec\beta) = (\beta_\Sigma,\beta_1,\ldots,\beta_k)$. This solution has the following asymptotics as $\Delta t \to 0$:
\begin{align}
\label{eqn:thm:competition:beta}
\beta_i
&= k^{-\frac{1}{2}} \frac{\sigma_K}{\sigma_S} - \frac{(1+k)^{1/2}}{2k^{3/4}} \Big(2\gamma_i^{1/2}-\frac{1}{k}\sum_{j=1}^k \gamma_j^{1/2} \Big) \left(\frac{\sigma_K}{\sigma_S}\right)^{3/2}\sqrt{\Delta t} + O(\Delta t),\\
\label{eqn:thm:competition:beta sum}
\beta_\Sigma
&= k^{\frac{1}{2}} \frac{\sigma_K}{\sigma_S} - \frac{(1+k)^{1/2}}{2k^{3/4}} \sum_{j=1}^k \gamma_j^{1/2} \left(\frac{\sigma_K}{\sigma_S}\right)^{3/2} \sqrt{\Delta t} + O(\Delta t).
\end{align}

\item Define, for sufficiently small $\Delta t \geq 0$,\footnote{Note that $\lambda\beta_\Sigma < 1$, so that $\varphi_i$ is well defined. Moreover, the upper bound on $\beta_i$ in \eqref{eqn:thm:competition:system:3} is equivalent to $\varphi_i$ being nonnegative.}
\begin{align}
\label{eqn:thm:competition:lambda}
\lambda
&= \frac{\beta_\Sigma \sigma_S^2}{\sigma_K^2+\beta_\Sigma^2\sigma_S^2}
= \frac{k^{1/2}}{1+k}\frac{\sigma_S}{\sigma_K}+\frac{k^{1/4}(k-1)}{2(1+k)^{3/2}}\frac{1}{k}\sum_{j=1}^k\gamma_j^{1/2}\left(\frac{\sigma_S}{\sigma_K}\right)^{1/2}\sqrt{\Delta t}+O(\Delta t) ,\\
\label{eqn:thm:competition:phi}
\varphi_i
&= 1- \frac{\lambda\beta_i}{1-\lambda\beta_\Sigma}
= \frac{(1+k)^{1/2}}{k^{1/4}}\gamma_i^{1/2}\left(\frac{\sigma_K}{\sigma_S}\right)^{1/2}\sqrt{\Delta t}+O(\Delta t),\\
\label{eqn:thm:competition:mu}
\mu_i
&= \lambda \varphi_i
= \frac{k^{1/4}}{(1+k)^{1/2}}\gamma_i^{1/2}\left(\frac{\sigma_S}{\sigma_K}\right)^{1/2}\sqrt{\Delta t}+O(\Delta t),
\end{align}
and set $\vec\varphi = (\varphi_1,\ldots,\varphi_k)$ and $\vec\mu = (\mu_1,\ldots,\mu_k)$. Then, for sufficiently small $\Delta t>0$, the pricing rule $(\lambda,\vec\mu,\vec\beta,\vec\varphi)$ forms a linear equilibrium.
\end{enumerate}
\end{theorem}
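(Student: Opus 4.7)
The theorem splits naturally into two halves: part (i) establishes existence, uniqueness, and asymptotics for the algebraic system \eqref{eqn:thm:competition:system:1}--\eqref{eqn:thm:competition:system:3}, while part (ii) verifies that the quadruple $(\lambda, \vec\mu, \vec\beta, \vec\varphi)$ built from that solution does constitute a linear equilibrium in the sense of Definition~\ref{def:competition:equilibrium}. My plan is to handle the two parts in order: part (i) pins down the candidate parameters by a decoupling-plus-implicit-function-theorem argument, and part (ii) checks the two clauses of the definition by, respectively, a linear-Gaussian projection for the dealers and a dynamic-programming argument for each HFT.

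For part (i), the key observation is that, for fixed $\beta_\Sigma > 0$ and small $\Delta t$, each equation in \eqref{eqn:thm:competition:system:2} is quadratic in $\beta_i$ alone with coefficients smooth in $(\beta_\Sigma, \Delta t)$. Setting $c := (\sigma_K/\sigma_S)^2$, the leading-order equation reduces to $(\beta_i \beta_\Sigma - c)^2 = 0$, so the discriminant vanishes at $\Delta t = 0$; a short expansion shows that for $\Delta t>0$ it equals $4 c^2 \beta_\Sigma \gamma_i (\beta_\Sigma^2 + c)\Delta t + O((\Delta t)^2) > 0$, giving two real roots symmetric about $c/\beta_\Sigma$. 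The admissibility constraint $\beta_i \beta_\Sigma \le c$ from \eqref{eqn:thm:competition:system:3} selects the smaller root and produces a function $\beta_i = \beta_i(\beta_\Sigma, \sqrt{\Delta t})$ that is smooth in $(\beta_\Sigma, \sqrt{\Delta t})$ near the limit. Substituting into \eqref{eqn:thm:competition:system:1} gives a single scalar equation $G(\beta_\Sigma, \sqrt{\Delta t}) = 0$; since its unique positive root at $\sqrt{\Delta t} = 0$ is $\beta_\Sigma = k^{1/2}\sigma_K/\sigma_S$ and $\partial_{\beta_\Sigma} G|_{\Delta t = 0} = -2 \neq 0$, the implicit function theorem yields a unique smooth branch $\beta_\Sigma(\sqrt{\Delta t})$. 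Matching orders of $\sqrt{\Delta t}$ in $G$ then delivers \eqref{eqn:thm:competition:beta sum}, and back-substituting into the quadratic formula gives \eqref{eqn:thm:competition:beta}.

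For part (ii), I plan to verify the two clauses separately. The zero-profit condition \eqref{eqn:def:competition:equilibrium:zero profit condition} is a linear-Gaussian projection: when HFT $j$ plays $\Delta M^j_n = \beta_j \Delta S_n - \varphi_j M^j_{n-1}$, the dealers reconstruct every $M^j_{n-1}$ from the publicly revealed past signals, so $\Delta Y_n + \sum_j \varphi_j M^j_{n-1} = \Delta K_n + \beta_\Sigma \Delta S_n$ is the only new randomness they see, and standard Gaussian conditioning yields $\wt\EE_n[\Delta S_n] = \frac{\beta_\Sigma \sigma_S^2}{\sigma_K^2 + \beta_\Sigma^2 \sigma_S^2}(\Delta Y_n + \sum_j \varphi_j M^j_{n-1})$, i.e.\ \eqref{eqn:thm:competition:lambda} together with $\mu_j = \lambda\varphi_j$. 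For the Nash property, I fix HFT $i$ and assume the other HFTs play $\Delta M^j$. A direct computation using $\mu_j = \lambda \varphi_j$ shows that all $M^j_{n-1}$, $j\neq i$, cancel from HFT $i$'s expected one-period wealth change, which collapses to $((1-\lambda(\beta_\Sigma - \beta_i))\Delta S_n - \lambda \Delta L^i_n - \mu_i M^i_{n-1})\Delta L^i_n$, so the effective state is $(L^i_{n-1}, M^i_{n-1}, \Delta S_n)$. I would then solve the infinite-horizon stationary problem by dynamic programming with a quadratic ansatz
\begin{align*}
V_i(\ell, m, s) &= -\tfrac{A_i}{2}\ell^2 + \tfrac{B_i}{2} s^2 - C_i \ell s - \tfrac{A'_i}{2} m^2 + C'_i m s + E_i \ell m + D_i;
\end{align*}
the per-period first-order condition is affine in $(\ell, m, s)$, and requiring its optimizer to coincide with $\Delta M^i = \beta_i s - \varphi_i m$ for all $(\ell, m, s)$ reproduces \eqref{eqn:thm:competition:phi} and the quadratic equation \eqref{eqn:thm:competition:system:2}, while matching the remaining coefficients in the Bellman equation fixes $(A_i,B_i,C_i,A'_i,C'_i,E_i,D_i)$. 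Admissibility of the candidate strategies is then granted by Lemma~\ref{lem:admissible:general} because $\varphi_i \in (0,1]$ for small $\Delta t$ in view of \eqref{eqn:thm:competition:phi}, and the second-order condition holds since $\lambda > 0$ and the discount/inventory penalty contribute a strictly negative quadratic term in $\delta$.

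The main obstacles are twofold. In part (i), the double root at $\Delta t = 0$ renders the Jacobian of the full $(k+1)$-dimensional system singular, so a naive implicit function theorem in $\Delta t$ fails; the decoupling via the quadratic formula and the reparametrisation by $\sqrt{\Delta t}$ are essential to restore smoothness and uniqueness. In part (ii), the delicate point is the cancellation that removes the $M^j_{n-1}$, $j\neq i$, from HFT $i$'s objective: this hinges on the identity $\mu_j = \lambda\varphi_j$ coming from the dealers' zero-profit condition, and without it the effective state would grow with $k$ and the coefficient matching would become unmanageable. Once that tractability is in hand, the remaining bookkeeping is a routine verification, and the Taylor expansions \eqref{eqn:thm:competition:lambda}--\eqref{eqn:thm:competition:mu} follow from those of $\beta_\Sigma$ and $\beta_i$ by direct substitution.
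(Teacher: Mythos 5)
Your overall strategy is sound and, for part (ii), essentially identical to the paper's: the zero-profit condition is verified by the same bivariate-Gaussian projection (Lemma~\ref{lem:zero profit}), the cancellation of the other HFTs' predicted inventories via $\mu_j=\lambda\varphi_j$ is exactly the tractability device the paper uses (cf.\ Remark~\ref{rem:tractability} and \eqref{eqn:competition:objective:equilibrium:M L}), and the quadratic-ansatz dynamic programming verification matches Lemma~\ref{lem:DPE} up to a linear change of state variables (the paper works in $(M^i,\Delta S, Z^i)$ with $Z^i=L^i-M^i$ rather than your $(\ell,m,s)$). For part (i), your route is genuinely different in organization: the paper rescales the full $(k+1)$-dimensional system by $\delta=\sqrt{\Delta t}$ and applies the implicit function theorem to the transformed system (whose Jacobian at $\delta=0$ is an explicitly invertible arrowhead matrix), and then proves uniqueness separately by a monotonicity argument. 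You instead eliminate each $\beta_i$ first by solving the quadratic \eqref{eqn:thm:competition:system:2} for the root selected by the constraint $\beta_i\beta_\Sigma\le\sigma_K^2/\sigma_S^2$, and reduce to a one-dimensional scalar equation for $\beta_\Sigma$. This is closer in spirit to the paper's uniqueness step (which constructs the same map $\beta_i=u_i(\beta_\Sigma,\Delta t)$ and shows it is decreasing) and has the advantage of delivering existence and uniqueness in one stroke; the price is that you must justify carefully that the selected root is smooth in $(\beta_\Sigma,\sqrt{\Delta t})$ up to the boundary $\Delta t=0$, which works because the discriminant is $\delta^2$ times a smooth positive function, so its square root is $\delta$ times a smooth function. (Minor: with $G(\beta_\Sigma,0)=\beta_\Sigma-kc/\beta_\Sigma$ one gets $\partial_{\beta_\Sigma}G=+2$ at the root, not $-2$; either way it is nonzero.)

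One formulation in part (ii) needs repair: you cannot require the per-period optimizer to coincide with $\Delta M^i=\beta_i s-\varphi_i m$ \emph{for all} $(\ell,m,s)$. The optimal feedback necessarily carries a strictly negative coefficient on the deviation $\ell-m$ (it is $\Delta M^i_n-\zeta_i(L^i_{n-1}-M^i_{n-1})$ with $\zeta_i>0$ in the paper's Lemma~\ref{lem:DPE}), so demanding the identity on all of $\RR^3$ forces $\zeta_i=0$ and renders your coefficient-matching system inconsistent. The correct requirement is that the optimizer agree with $\Delta M^i$ on the invariant set $\{\ell=m\}$, which suffices for the verification because $M^i_0=L^i_0$ and the feedback preserves $\ell=m$; this is precisely what the paper's change of variables to $Z^i=L^i-M^i$ makes transparent (the optimal control is $\Delta Z^i=-\zeta_i Z^i$, hence identically zero from $Z^i_0=0$). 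You should also state the transversality step explicitly: the infinite-horizon verification requires $(1-\rho_i\Delta t)^N\EE_1[v(M^i_N,\Delta S_{N+1},Z^i_N)]\to 0$, which follows from admissibility and the quadratic growth of $v$ as in the paper's Lemma~\ref{lem:DPE}~(iii). With these two repairs the argument goes through.
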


The optimal performance of HFT $i$ in the equilibrium can be computed explicitly in terms of $\beta_i,\beta_\Sigma$, and the model parameters:

\begin{proposition}[HFT~$i$'s optimization]
\label{prop:HFT}
Let $(\lambda,\vec\mu,\vec\beta,\vec\varphi)$ be as in Theorem~\ref{thm:competition}. Recall that for each $j=1,\ldots,k$, the inventory prediction process $M^j = (M^j_n)_{n\geq 0}$ is defined by $M^j_0 = L^j_0$ and
\begin{align*}
\Delta M^j_n
&= \beta_j\Delta S_n - \varphi_j M^j_{n-1}.
\end{align*}
Fix $i$ and suppose that the dealers use the pricing rule $(\lambda,\vec\mu,\vec\beta,\vec\varphi)$ and that each other HFT $j\neq i$ uses the strategy $\Delta M^j$. Then, for sufficiently small $\Delta t > 0$, the strategy $\Delta M^i$ is optimal for HFT~$i$ and her optimal performance at time $0$ is
\begin{align*}
-\frac{A_i}{2}(L^i_0)^2 + \frac{B_i}{2}(\Delta S_1)^2 - C_iL^i_0\Delta S_1 + D_i,
\end{align*}
where
\begin{align}
\notag
A_i
&= \frac{(1-\rho_i\Delta t)(1-\varphi_i)^2}{1-(1-\rho_i\Delta t)(1-\varphi_i)^2}\gamma_i \Delta t
= \frac{k^{1/4}}{2(1+k)^{1/2}} \gamma_i^{1/2}\left(\frac{\sigma_S}{\sigma_K}\right)^{1/2}\sqrt{\Delta t} + O(\Delta t),\\
\notag
B_i
&= (1-\rho_i\Delta t)\beta_i  (2(1-\lambda\beta_\Sigma)-\beta_i(A_i+ \gamma_i  \Delta t))\\
&= \frac{2}{k^{1/2}(1+k)} \frac{\sigma_K}{\sigma_S} + \frac{1}{2k^{3/4}(1+k)^{3/2}} \Big[(2+6k)\frac{1}{k}\sum_{j=1}^k\gamma_j^{1/2} -5(1+k)\gamma_i^{1/2} \Big] \left(\frac{\sigma_K}{\sigma_S}\right)^{3/2}\sqrt{\Delta t} + O(\Delta t),\notag\\
\notag
C_i
&=(1-\rho_i\Delta t)( \beta_i  (1-\varphi_i) (A_i+\gamma_i  \Delta t)+\varphi_i  (1- \lambda\beta_\Sigma))\\
&= \frac{3}{2k^{1/4}(1+k)^{1/2}} \gamma_i^{1/2} \left(\frac{\sigma_K}{\sigma_S}\right)^{1/2} \sqrt{\Delta t} + O(\Delta t),\notag\\
\label{eqn:prop:HFT:D}
D_i
&= \frac{(1-\rho_i\Delta t)B_i \sigma_S^2}{2\rho_i}\\
&= \frac{1}{k^{1/2}(1+k)} \frac{\sigma_S\sigma_K}{\rho_i} +\frac{1}{4k^{3/4}(1+k)^{3/2}} \Big[(2+6k)\frac{1}{k}\sum_{j=1}^k\gamma_j^{1/2} - 5(1+k)\gamma_i^{1/2} \Big] \frac{\sigma_S^{1/2}\sigma_K^{3/2}}{\rho_i}\sqrt{\Delta t} + O(\Delta t).\notag
\end{align}
\end{proposition}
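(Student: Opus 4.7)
The plan is to prove this via dynamic programming applied to HFT $i$'s problem after the other HFTs are locked into their equilibrium strategies. The key preparatory reduction is that, upon substituting $\Delta L^j_n = \beta_j\Delta S_n - \varphi_j M^j_{n-1}$ for $j\neq i$ into the goal functional~\eqref{eqn:competition:objective} and invoking the identity $\mu_j = \lambda\varphi_j$ from~\eqref{eqn:thm:competition:mu}, every contribution involving $M^j$ with $j\neq i$ cancels, and HFT $i$'s one-period trade profit collapses to
\[
\bigl((1-\lambda(\beta_\Sigma-\beta_i))\Delta S_n - \lambda\varphi_i M^i_{n-1} - \lambda\Delta L^i_n\bigr)\Delta L^i_n.
\]
Hence HFT $i$ faces a single-agent linear-quadratic control problem with state $(L^i_{n-1}, M^i_{n-1},\Delta S_n)$, controlled dynamics $L^i_n = L^i_{n-1} + \Delta L^i_n$, deterministic prediction dynamics $M^i_n = (1-\varphi_i)M^i_{n-1} + \beta_i\Delta S_n$, and i.i.d.\ normal signals; this is the tractability feature of Remark~\ref{rem:tractability}.

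The next step is the DP verification. I would use a full quadratic value-function ansatz $V(L,M,s)$ and the Bellman equation $V(L,M,s) = (1-\rho_i\Delta t)\sup_{\Delta L}\{r(L,M,s,\Delta L) + \EE_{s'}[V(L+\Delta L,(1-\varphi_i)M+\beta_i s, s')]\}$. Strict concavity in $\Delta L$ gives a unique linear optimizer; for it to coincide with $\Delta M^i = \beta_i s - \varphi_i M$ along the diagonal $\{L=M\}$, a short calculation shows that after eliminating the $LM$-coefficient of $V$, both the $s$- and $M$-coefficient conditions reduce to the single identity $(1-\varphi_i)(1-\lambda\beta_\Sigma) = \lambda\beta_i$, which is precisely the definition of $\varphi_i$ in~\eqref{eqn:thm:competition:phi}. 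Since $L^i_0 = M^i_0$ by assumption, the DP-optimal trajectory therefore stays on the diagonal $L^i_n = M^i_n$ for all $n$, so the committed open-loop strategy $\Delta L^i_n = \Delta M^i_n$ realizes the DP optimum at every step and is optimal.

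For the explicit coefficients I restrict to the diagonal $L=M$, where the effective reward collapses to $(1-\lambda\beta_\Sigma)\beta_i s^2 - (1-\lambda\beta_\Sigma)\varphi_i L s - \tfrac{\gamma_i\Delta t}{2}((1-\varphi_i)L + \beta_i s)^2$, and the restricted value $V_{\mathrm{eq}}(L,s) = -\tfrac12 A_i L^2 + \tfrac12 B_i s^2 - C_i L s + D_i$ satisfies the one-dimensional Bellman equation $V_{\mathrm{eq}}(L,s) = (1-\rho_i\Delta t)\{r^*(L,s) + \EE_{s'} V_{\mathrm{eq}}((1-\varphi_i)L+\beta_i s, s')\}$. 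Matching the coefficients of $L^2$, $s^2$, $Ls$, and the constant yields four scalar linear equations whose solutions are exactly the stated formulas; for instance, the $L^2$-equation $A_i = (1-\rho_i\Delta t)(1-\varphi_i)^2(A_i+\gamma_i\Delta t)$ and the constant equation $D_i = (1-\rho_i\Delta t)[D_i + \tfrac12 B_i\sigma_S^2\Delta t]$ rearrange immediately. The asymptotic expansions then follow by substituting \eqref{eqn:thm:competition:beta}--\eqref{eqn:thm:competition:mu}. The main obstacle is bookkeeping for $B_i$ and $D_i$: the factor $2(1-\lambda\beta_\Sigma) - \beta_i(A_i+\gamma_i\Delta t)$ features a cancellation at order $\sqrt{\Delta t}$, so both $\lambda\beta_\Sigma$ and $\beta_i$ must be expanded one order beyond what is needed for $A_i$ or $C_i$ in order to recover the combination $(2+6k)\tfrac{1}{k}\sum_j\gamma_j^{1/2} - 5(1+k)\gamma_i^{1/2}$ appearing in~\eqref{eqn:prop:HFT:D}.
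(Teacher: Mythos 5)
Your overall architecture---substitute the other HFTs' equilibrium strategies, use $\mu_j=\lambda\varphi_j$ so that the predictions $M^j$, $j\neq i$, drop out, and then verify a quadratic value function for the resulting single-agent linear--quadratic problem---is the same as the paper's, which merely works in the coordinates $(M^i,\Delta S,Z^i)$ with $Z^i=L^i-M^i$ rather than $(L^i,M^i,\Delta S)$. Your coefficient matching on the diagonal $L=M$, which produces $A_i,B_i,C_i,D_i$, is also correct.

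The gap is in the optimality verification. You assert that, after eliminating the $LM$-coefficient of $V$ via its fixed-point equation, both the $s$- and the $M$-coefficient conditions for the Bellman optimizer to agree with $\Delta M^i=\beta_i s-\varphi_i M$ on the diagonal collapse to the single identity $(1-\varphi_i)(1-\lambda\beta_\Sigma)=\lambda\beta_i$, i.e., to the definition of $\varphi_i$ in \eqref{eqn:thm:competition:phi}. This cannot be right: if it were, $\Delta M^i$ would be a best response for \emph{every} signal weight $\beta_i$ (with $\varphi_i$ adjusted accordingly), and the equilibrium value of $\beta_i$ would be undetermined by the HFT's optimization. There are in fact two independent conditions. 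One is the definition of $\varphi_i$; the other is precisely the quartic equation \eqref{eqn:thm:competition:system:2}, which never appears in your argument. Concretely, in the paper's notation the cross coefficient $F_i$ of the value function is pinned down by the Bellman recursion \eqref{eqn:lem:DPE:F}, the $M$-coefficient of the optimizer vanishes if and only if $F_i+\gamma_i\Delta t=\lambda\varphi_i/(1-\varphi_i)$, and Lemma~\ref{lem:F} shows that this identity is \emph{equivalent} to \eqref{eqn:thm:competition:system:2}; only once it holds does the $s$-coefficient condition reduce to the definition of $\varphi_i$. So your ``short calculation'' must invoke the quartic, and as written the proof does not close.

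A secondary omission: satisfying the dynamic programming equation does not by itself prove optimality over the admissible class $\mathcal{A}$ on an infinite horizon. One still needs the transversality step---quadratic growth bounds on $V$ and on the running reward, combined with the $L^2$-boundedness built into $\mathcal{A}$, give $(1-\rho_i\Delta t)^N\EE_1[V(\cdot_N)]\to 0$ so the telescoped finite-horizon inequalities can be summed (this is Lemma~\ref{lem:DPE}(iii) in the paper). Relatedly, the feedback coefficient in the off-diagonal direction (the $\zeta_i$ of Remark~\ref{rem:stability}) must be shown to lie in $(0,1)$ for the off-diagonal Riccati equations to be solvable and for the resulting optimizer to be admissible.
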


The proofs of Theorem~\ref{thm:competition} and Proposition~\ref{prop:HFT} are provided in Appendix~\ref{sec:proofs}.

\begin{remark}
\label{rem:stability}
The equilibrium from Theorem~\ref{thm:competition} has a certain stability property with respect to mispredicted inventories. Indeed, the proof of Proposition~\ref{prop:HFT} in Appendix~\ref{sec:proofs} (in particular, Lemma~\ref{lem:DPE}) shows that HFT $i$'s optimal strategy is given by
\begin{align*}
\Delta L^i_n
&= \Delta M^i_n - \zeta_i (L^i_{n-1}-M^i_{n-1}),\quad\text{for some }\zeta_i \in (0,1).
\end{align*}
If the dealers' initial inventory prediction is correct ($M^i_0 = L^i_0$), this shows the optimality of $\Delta L^i = \Delta M^i$ for HFT $i$. But even if the dealers' inventory prediction is incorrect, the HFT has an incentive to gradually reduce the distance between her actual inventory and the prediction:
\begin{align*}
\Delta (L^i - M^i)_n
&= - \zeta_i (L^i_{n-1}-M^i_{n-1}).
\end{align*}
\end{remark}

\begin{remark}
\label{rem:tractability}
In the equilibrium, the goal functional \eqref{eqn:competition:objective} of HFT $i$ simplifies considerably. Indeed, by plugging $\mu_j = \lambda\varphi_j$ and $\Delta L^j_n = \Delta M^j_n = \beta_j\Delta S_j - \varphi_j M^j_{n-1}$ into \eqref{eqn:competition:objective} for $j\neq i$, the inventory prediction processes $M^j$, $j\neq i$, drop out of the optimization criterion. Whence, the value function of HFT $i$ only needs to keep track of the value increments and HFT $i$'s own actual inventory $L^i$ and inventory prediction $M^i$ but not the other HFTs' inventory predictions. This reduces the dimension of the problem from $2+k$ to $3$.
\end{remark}

\subsection{Comparative Statics}

We now discuss the comparative statics of the equilibrium with several competing HFTs. 

\paragraph{High-frequency limit.}
As for the monopolistic case covered by Theorem~\ref{thm:monopoly}, the high-frequency limits of all relevant quantities are very good approximations at the fast trading speeds relevant for high-frequency trading. For the dealers' equilibrium pricing rule, \eqref{eqn:thm:competition:lambda} and \eqref{eqn:thm:competition:mu} show convergence towards the risk-neutral model with imperfect competition studied by \cite{admati.pfleiderer.88,subrahmanyam.98}. Indeed, like for one monopolistic HFT, the sensitivities $\mu_1,\ldots,\mu_k$ of the execution price with respect to the HFTs' positions vanish at rate $\sqrt{\Delta t}$, whereas the sensitivity $\lambda$ with respect to the net order flow converges to its risk-neutral counterpart. Likewise, the weight $\beta_i$ that is placed on new trading signals also converges to its risk-neutral counterpart. As in the monopolistic case, the mean-reversion rate $\varphi_i/\Delta t$ diverges as the trading frequency increases. Consequently, the HFT $i$'s performance $D_i$ converges to its risk-neutral counterpart, as the inventory penalty disappears in the high-frequency limit.

\paragraph{Nash competition.}
To understand how the Nash competition between the HFTs impacts the scaling of the high-frequency limits, it is helpful to compare them to the corresponding results that obtain if the HFTs coordinate their actions through a central planner who aims to maximize their aggregate performance. We focus on the high-frequency limit of $k$ homogeneous HFTs that have the same cost and preference parameters $\gamma$ and $\rho$ and initial inventories. Then, the central planner problem reduces to the case of a monopolistic HFT studied in Section~\ref{sec:monopoly}, but with $\gamma$ replaced by $\gamma/k$. That is, the ``aggregate HFT'' accumulates all of the individual inventory tolerances. Since the limiting price impact and trading strategy in Theorem~\ref{thm:monopoly} are independent of inventory aversion, we observe the typical effects of Nash competition on trading and welfare. Indeed, with imperfect competition, the HFTs overuse their common informational advantage in that their aggregate signal sensitivity is too large: $\beta_\Sigma=k^{1/2}\sigma_K/\sigma_S$ under competition compared to $\beta_\Sigma=\sigma_K/\sigma_S$ with coordination. As observed by \cite{subrahmanyam.98}, this is an incarnation of the classical ``tragedy of the commons'': the HFTs overuse their common good, the liquidity available in the market. Put differently, each of them only internalizes their own price impact cost, but not the negative effect this has for the others. This excess trading volume (with the same informational advantage) drives down the HFTs' aggregate performance by a factor of $2k^{1/2}/(1+k) \in (0,1]$. This in turn allows the dealers to break even with a smaller price impact parameter $\lambda$; it is reduced by the same factor $2k^{1/2}/(1+k) \in (0,1]$.

\paragraph{Inventory aversion.}
Under Nash competition, the first-order correction term of the price impact parameter $\lambda$ (cf.~\eqref{eqn:thm:competition:lambda}) is of the order $O(\sqrt{\Delta t})$ and \emph{increases} with the inventory aversion parameter $\gamma$. This is in stark contrast to the monopolistic case where the first-order correction term is of the order $O(\Delta t)$ and decreasing in $\gamma$ (cf.~\eqref{eqn:thm:monopoly:lambda}). This initially surprising effect is explained by the excessive trading due to Nash competition. Indeed, with inventory aversion, the HFTs scale back their aggregate trades on their signals as described by $\beta_\Sigma$ in \eqref{eqn:thm:competition:beta sum}. This moves the HFTs closer to their coordinated equilibrium and forces the dealers to recuperate their lost trading profits by increasing $\lambda$.

Regarding an HFT's individual performance, increasing inventory aversion has two opposing effects. On the one hand, increasing inventory aversion of course lowers the HFT's performance through the higher inventory penalty. On the other hand, as above, it moves the competitive HFTs closer to their coordinated equilibrium and thereby improves each HFT's individual performance. This counterplay is reflected in the first-order correction term of an HFT's performance $D_i$ in \eqref{eqn:prop:HFT:D}. For example, in the case of homogeneous HFTs, the first-order correction term is negative for $k\leq 2$, positive for $k\geq 4$, and vanishes for $k=3$. Whence, under sufficiently strong competition, inventory aversion has a beneficial effect on the HFTs' performance compared to the corresponding risk-neutral equilibrium.

\paragraph{Heterogeneity.}
While the aggregate signal sensitivity $\beta_\Sigma$ is always decreasing in the individual HFTs' inventory aversions, the sign of the first-order correction of the individual signal sensitivity $\beta_i$ (cf.~\eqref{eqn:thm:competition:beta}) of HFT $i$ depends on how her inventory aversion relates to those of the other HFTs. Indeed, if an HFT's inventory costs are sufficiently low compared to the others', then imperfect competition allows to exploit price signals even more aggressively than in the risk-neutral case, thereby also improving the respective performance (cf.~\eqref{eqn:prop:HFT:D}). For HFTs with comparatively high inventory costs, the situation is reversed.

\begin{figure}
\begin{center}
\includegraphics[width=0.45\textwidth]{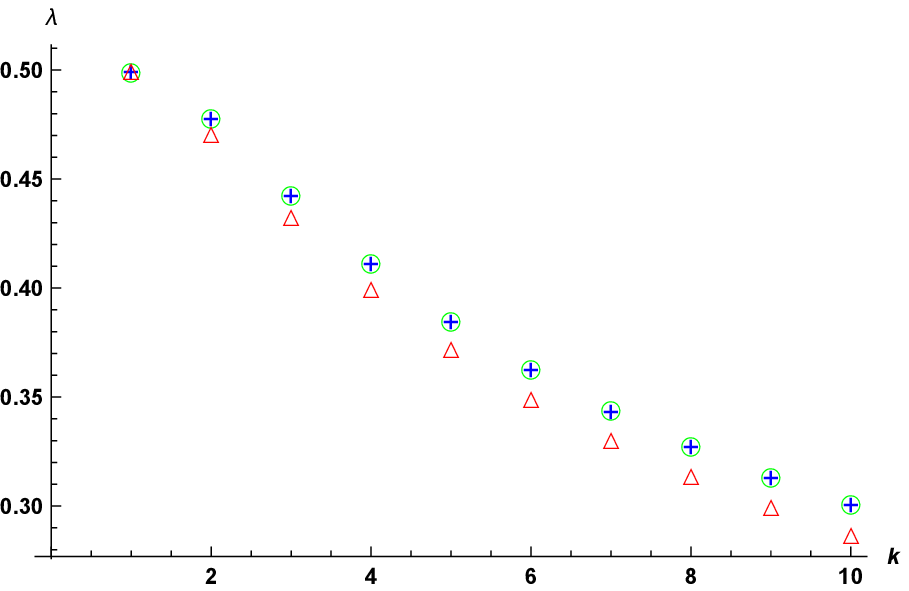}
\includegraphics[width=0.45\textwidth]{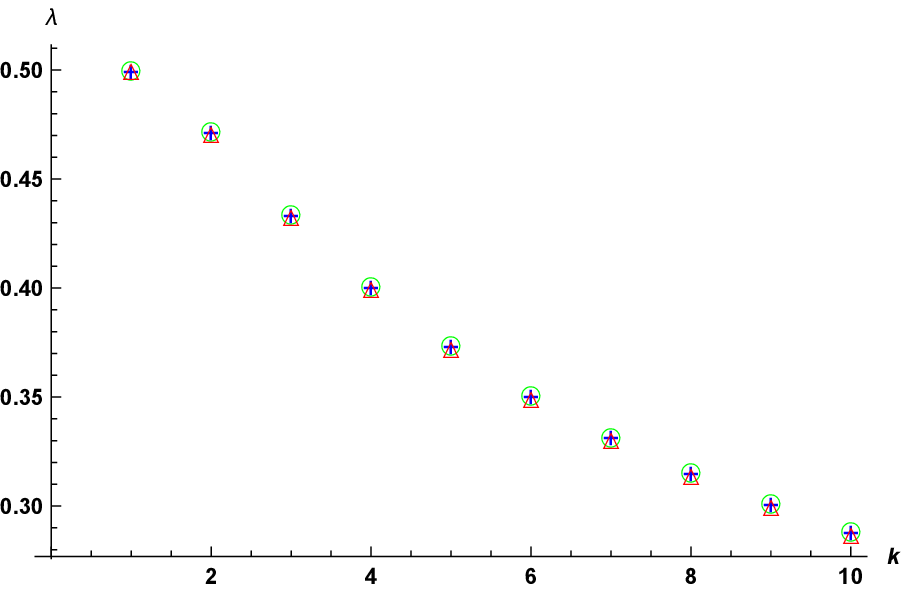}
\caption{\label{fig:competition}
Sensitivity $\lambda$ of the execution price with respect to the net order flow plotted against the number $k$ of competing HFTs, for 1 trading round per day (left panel) and 100 trading rounds per day (right panel): high-frequency limit (triangles), first-order approximation (crosses), and exact solution (circles).}
\end{center}
\end{figure}

\medskip

The effects described above are interesting from a theoretical point of view. However, Figure~\ref{fig:competition} illustrates that while the impact of inventory aversion is clearly visible at low trading frequencies (e.g., daily), it disappears quickly as $\Delta t$ tends to zero. Accordingly, these effects only play a secondary role in a high-frequency context.

\subsection{Transaction Taxes}

Transaction taxes are often mentioned as a possible tool to improve market quality by curbing high-frequency trading.  As in the risk-neutral one-period model of \cite{subrahmanyam.98}, quadratic transaction taxes can also be incorporated into the present framework.\footnote{Other specifications such as taxes proportional to the trade size are not tractable because they lead to nonlinear filtering problems. However, we expect the broad conclusions for such models to be similar.} This means that HFTs incur an additional transaction cost that is proportional to the squared sizes of their individual trades. The stationary goal functional~\eqref{eqn:competition:objective} then becomes
\begin{align*}
\EE_1\bigg[ \sum_{n=1}^\infty (1-\rho_i\Delta t)^n \Big\lbrace\Big(\Delta S_n-c\Delta L_n^i-\lambda\sum_{j=1}^k \Delta L^j_n-\sum_{j=1}^k \mu_j M^j_{n-1}\Big)\Delta L^i_n-\frac{\gamma_i\Delta t}{2}(L^i_{n-1}+\Delta L^i_n)^2\Big\rbrace\bigg]
\end{align*}
for some transaction tax parameter $c>0$.

In analogy to Theorem~\ref{thm:competition}, one can show that an equilibrium pricing rule is identified by the solution of $(\beta_\Sigma,\beta_1,\ldots,\beta_k)$ of the constrained system\footnote{While $\lambda$ and $\mu_i$ are still given by \eqref{eqn:thm:competition:lambda} and \eqref{eqn:thm:competition:mu}, respectively, $\varphi_i$ is determined by $\varphi_i = 1 - \frac{(\lambda + 2c)\beta_i}{1-\lambda \beta_\Sigma}$. The upper bound on $\beta_i$ is equivalent to $\varphi_i$ being nonnegative.}
\begin{align*}
\beta_\Sigma
&= \sum_{j=1}^k \beta_j,\\
0
&= (1-\rho_i\Delta t) \beta_i^2 \bigg[\beta_\Sigma+2c\bigg(\Big(\frac{\sigma_K}{\sigma_S}\Big)^2+\beta_\Sigma^2\bigg)\bigg]^2 \notag\\ 
&\quad-\left[\bigg(2c\Big(\Big(\frac{\sigma_K}{\sigma_S}\Big)^2+\beta_\Sigma^2\Big)+\beta_\Sigma\bigg)(2-\rho_i\Delta t)+\beta_\Sigma^2\gamma_i\Delta t\right]\Big(\frac{\sigma_K}{\sigma_S}\Big)^2  \beta_i\\ 
&\quad+ \Big(\frac{\sigma_K}{\sigma_S}\Big)^4 (1-\beta_i \gamma_i \Delta t), \quad i=1,\ldots,k,\notag\\
0
&<\beta_\Sigma,
\quad\text{and}\quad
0
< \beta_i \leq \frac{\sigma_K^2}{2c(\sigma_K^2 +\beta_\Sigma^2\sigma_S^2) + \beta_\Sigma \sigma_S^2},\quad i=1,\ldots,k.
\end{align*}

\begin{figure}
\begin{center}
\includegraphics[width=0.45\textwidth]{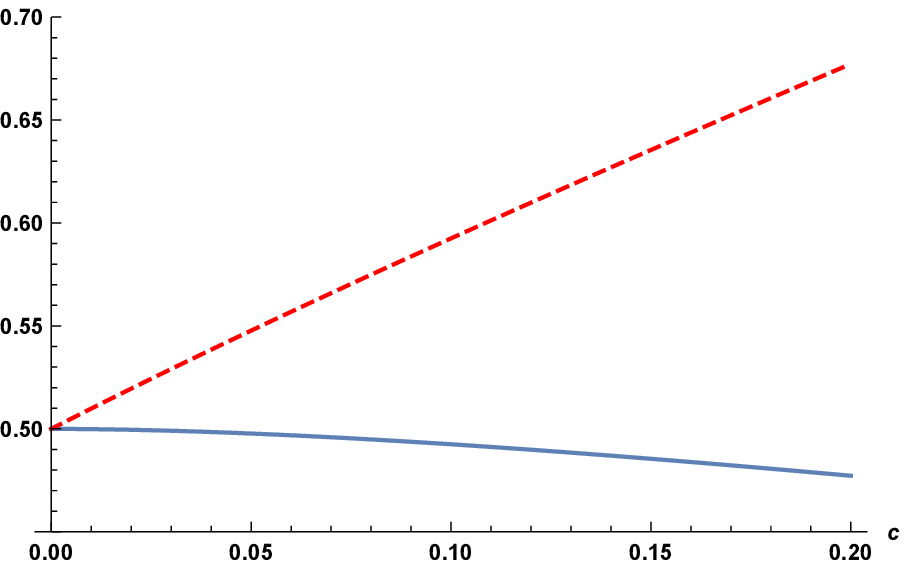}
\includegraphics[width=0.45\textwidth]{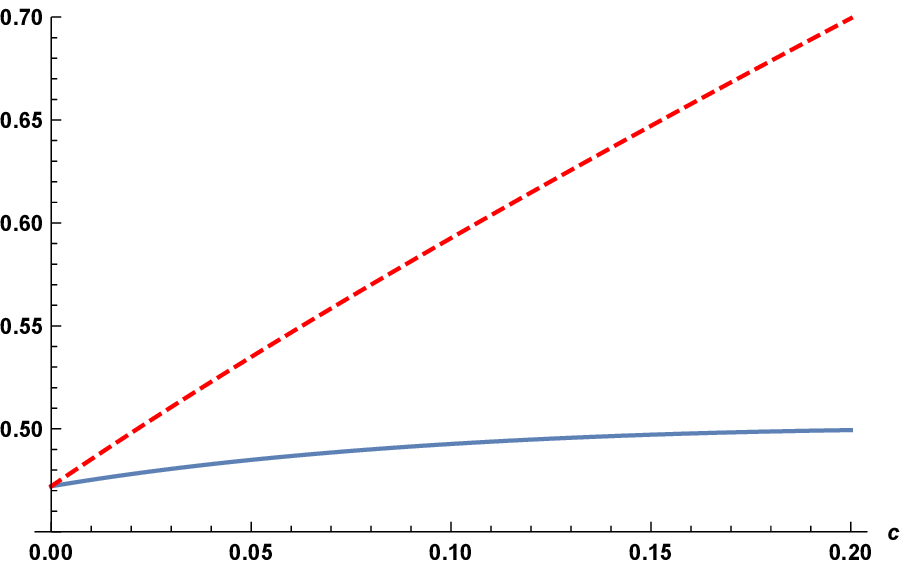}
\caption{\label{fig:tax}
Total trading costs $\lambda + c$ (dashed) and price impact $\lambda$ (solid) with respect to the net order flow plotted against the transaction tax parameter $c$, for a monopolistic HFT (left panel) and  two competitors (right panel). The trading frequency is 100 times per day.
}
\end{center}
\end{figure}

The corresponding price impact parameter $\lambda$ and total trading costs $\lambda + c$ are depicted in Figure~\ref{fig:tax}, both for a monopolistic HFT and for two competitors. Our results corroborate the findings of \cite{subrahmanyam.98} in a one-period model with risk-neutral HFTs. The total trading costs $\lambda + c$ are in both cases increasing in the transaction tax $c$. Regarding price impact only, the comparative statics are different in the monopolistic and the oligopolistic case. For a monopolistic insider, a small transaction tax decreases price impact. In contrast, with Nash competition, transaction taxes tend to \emph{increase} price impact. The reason is again the negative externality inherent in the HFTs choices: ``without a transaction tax, [\ldots] they end up trading `too much' in equilibrium, that is, in a dissipative fashion such that their profits decrease in the total number of informed agents in the market, leading to greater market liquidity. A transaction tax causes them to scale back their trading to the extent that, while their profits net of the transaction tax are decreasing in the tax, the profits gross of the transaction tax are increasing in the tax. This causes the transaction tax to have a perverse effect: it reduces market liquidity (and increases the adverse price impact faced by informationless traders), but it also reduces informed trader profits''~\cite{subrahmanyam.98}.

This basic mechanism should be kept in mind when discussing transaction taxes in the context of high-frequency trading. In the modeling framework considered here, market liquidity cannot be improved by taxation but only by encouraging more competition among HFTs, compare Figure~\ref{fig:competition}. Taxes can, however, become socially preferable if one considers costs for information acquisition~\cite{subrahmanyam.98} or (over-)investment in trading technologies~\cite{biais.al.15}. Incorporating such features into the present model is a challenging but important direction for future research.

\appendix

\section{Proofs}
\label{sec:proofs}

In this section, we prove Theorem~\ref{thm:competition} and Proposition~\ref{prop:HFT}. We start with the existence of a local solution to the system \eqref{eqn:thm:competition:system:1}--\eqref{eqn:thm:competition:system:3} for small $\Delta t$.

\begin{lemma}
\label{lem:system}
There is $\varepsilon > 0$ such that for all $\Delta t \in [0,\varepsilon)$, the system \eqref{eqn:thm:competition:system:1}--\eqref{eqn:thm:competition:system:3} has a unique solution $(\beta_\Sigma,\beta_1,\ldots,\beta_k)$ with the asymptotics \eqref{eqn:thm:competition:beta}--\eqref{eqn:thm:competition:beta sum}.
\end{lemma}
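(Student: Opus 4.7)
The plan is to reduce the $(k+1)$-dimensional system to a single scalar equation in $\beta_\Sigma$ by solving for each $\beta_i$ as a function of $\beta_\Sigma$ and $\Delta t$, and then to apply the implicit function theorem after a Puiseux change of variable $s := \sqrt{\Delta t}$. The main obstacle is the degeneracy of the system at $\Delta t = 0$: writing $r := \sigma_K/\sigma_S$, completing the square in \eqref{eqn:thm:competition:system:2} yields
\begin{equation*}
F_i = (\beta_i\beta_\Sigma - r^2)^2 - \Delta t\bigl[\rho_i\beta_i\beta_\Sigma(\beta_i\beta_\Sigma - r^2) + \gamma_i r^2 \beta_i(\beta_\Sigma^2 + r^2)\bigr],
\end{equation*}
which at $\Delta t = 0$ has the double root $\beta_i = r^2/\beta_\Sigma$, so the Jacobian of \eqref{eqn:thm:competition:system:1}--\eqref{eqn:thm:competition:system:2} in $(\beta_1,\ldots,\beta_k,\beta_\Sigma)$ is singular there and a direct IFT fails.

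Fix $\beta_\Sigma > 0$ and view $F_i = 0$ as a quadratic in $\beta_i$ with positive leading coefficient $(1-\rho_i\Delta t)\beta_\Sigma^2$. Its discriminant $\Delta_i$ vanishes identically in $\beta_\Sigma$ at $\Delta t = 0$, so $\Delta_i = \Delta t \cdot G_i(\beta_\Sigma,\Delta t)$ with $G_i$ polynomial and $G_i(k^{1/2}r,0) = 4\gamma_i k^{1/2}(k+1)r^7 > 0$. Hence $\sqrt{\Delta_i} = s\sqrt{G_i(\beta_\Sigma, s^2)}$ is jointly smooth in $(\beta_\Sigma, s)$ near $(k^{1/2}r, 0)$, and the upper bound $\beta_i\beta_\Sigma \leq r^2$ in \eqref{eqn:thm:competition:system:3} uniquely selects the smaller root
\begin{equation*}
\beta_i(\beta_\Sigma, s) := \frac{-b_i(\beta_\Sigma, s^2) - s\sqrt{G_i(\beta_\Sigma, s^2)}}{2a(\beta_\Sigma, s^2)}, \qquad \beta_i(k^{1/2}r, 0) = k^{-1/2}r,
\end{equation*}
as a smooth function of $(\beta_\Sigma, s)$ in a neighborhood of the base point.

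Define $H(\beta_\Sigma, s) := \beta_\Sigma - \sum_{i=1}^k \beta_i(\beta_\Sigma, s)$. One verifies $H(k^{1/2}r, 0) = 0$ and
\begin{equation*}
\frac{\partial H}{\partial\beta_\Sigma}(k^{1/2}r, 0) = 1 - \sum_{i=1}^k \bigl(-r^2/\beta_\Sigma^2\bigr)\big|_{\beta_\Sigma = k^{1/2}r} = 1 - k \cdot (-1/k) = 2 \neq 0.
\end{equation*}
The implicit function theorem then yields a unique smooth branch $\beta_\Sigma = \beta_\Sigma(s)$ on some $[0, \varepsilon)$ with $\beta_\Sigma(0) = k^{1/2}r$, and setting $\beta_i(s) := \beta_i(\beta_\Sigma(s), s)$ produces the desired solution of \eqref{eqn:thm:competition:system:1}--\eqref{eqn:thm:competition:system:3}. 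A chain-rule computation using $\partial_s\beta_i|_{(k^{1/2}r, 0)} = -\sqrt{G_i(k^{1/2}r, 0)}/(2kr^2) = -r^{3/2}\gamma_i^{1/2}k^{-3/4}(k+1)^{1/2}$ and $\beta_\Sigma'(0) = \tfrac{1}{2}\sum_i \partial_s\beta_i|_{(k^{1/2}r, 0)}$ then recovers the asymptotics \eqref{eqn:thm:competition:beta}--\eqref{eqn:thm:competition:beta sum}.

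To upgrade this to uniqueness throughout the full constraint set, I argue by compactness. Every solution satisfies $\beta_\Sigma \leq k^{1/2}r$ (summing $\beta_i \leq r^2/\beta_\Sigma$ over $i$), so along any sequence of solutions with $\Delta t_n \to 0$ the values $(\beta_\Sigma^{(n)}, \beta_1^{(n)}, \ldots, \beta_k^{(n)})$ are bounded. Passing to the limit in $F_i = 0$ forces $\beta_i^*\beta_\Sigma^* = r^2$ at any limit point $(\beta_\Sigma^*, \beta_1^*, \ldots, \beta_k^*)$; the case $\beta_\Sigma^* = 0$ is ruled out since $\beta_i^{(n)} \leq \beta_\Sigma^{(n)} \to 0$ would force $F_i \to r^4 \neq 0$. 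Hence any limit point equals $(k^{1/2}r, k^{-1/2}r, \ldots, k^{-1/2}r)$, and combining this with the local uniqueness from the IFT yields the claimed uniqueness for all sufficiently small $\Delta t \geq 0$.
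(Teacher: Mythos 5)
Your proof is correct, but it is organized differently from the paper's. The paper handles the degeneracy at $\Delta t=0$ by blowing up all $k+1$ variables at once --- setting $\delta=\sqrt{\Delta t}$, $x=(\beta_\Sigma-\bar\beta_\Sigma)/\delta$, $y_i=(\beta_i-\bar\beta_i)/\delta$ --- and applying the $(k+1)$-dimensional implicit function theorem to the transformed system, which requires exhibiting the first-order coefficients as the solution of a quadratic system and checking invertibility of an arrow-shaped $(k+1)\times(k+1)$ Jacobian; you instead eliminate the $\beta_i$ first by solving each quadratic explicitly, observe that the discriminant factors as $\Delta t\cdot G_i$ with $G_i>0$ so that the selected root is jointly smooth in $(\beta_\Sigma,\sqrt{\Delta t})$, and then run a one-dimensional IFT whose nondegeneracy condition is the trivial computation $\partial H/\partial\beta_\Sigma=2$. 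For uniqueness the paper proves that each $u_i(\cdot,\Delta t)$ is decreasing, so that $\beta_\Sigma\mapsto\sum_i u_i(\beta_\Sigma,\Delta t)$ crosses the identity at most once --- a global statement for each fixed small $\Delta t$ with no limiting argument --- whereas you localize all solutions near $(k^{1/2}r,k^{-1/2}r,\ldots,k^{-1/2}r)$ by compactness as $\Delta t\to 0$ and then invoke the local uniqueness furnished by the IFT; both are valid, and amusingly the paper's uniqueness step already performs the same scalar reduction $\beta_i=u_i(\beta_\Sigma,\Delta t)$ that you use for existence. One point worth a sentence in your write-up: justify that the \emph{smaller} root is the one obeying $\beta_i\beta_\Sigma\le r^2$, e.g.\ by noting from your completed square that $F_i$ evaluated at $\beta_i=r^2/\beta_\Sigma$ equals $-\Delta t\,\gamma_i r^4(\beta_\Sigma^2+r^2)/\beta_\Sigma<0$ for $\Delta t>0$, so $r^2/\beta_\Sigma$ lies strictly between the two roots (equivalently, the product of the roots exceeds $r^4/\beta_\Sigma^2$, so at most one root can satisfy the bound). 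This is a one-line check, not a gap.
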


\begin{proof}
\emph{Step 1.} We note that for $\Delta t = 0$, the system \eqref{eqn:thm:competition:system:1}--\eqref{eqn:thm:competition:system:3} simplifies to
\begin{align*}
\beta_\Sigma
= \sum_{j=1}^k\beta_j,\quad
\beta_i \beta_\Sigma
= \left(\frac{\sigma_K}{\sigma_S}\right)^2,\quad
0
< \beta_\Sigma,\quad
0
< \beta_i\beta_\Sigma
\leq \frac{\sigma_K^2}{\sigma_S^2}, \quad i=1,\ldots,k,
\end{align*}
which has the unique solution $\bar\beta_\Sigma = k^{\frac{1}{2}}\frac{\sigma_K}{\sigma_S}$, $\bar\beta_i = k^{-\frac{1}{2}}\frac{\sigma_K}{\sigma_S}$, $i=1,\ldots,k$.

\emph{Step 2.} We next show the existence of a solution for small $\Delta t > 0$. First, we transform the system \eqref{eqn:thm:competition:system:1}--\eqref{eqn:thm:competition:system:3} to an equivalent system which is amenable to the implicit function theorem. Consider the following reparameterization of the domain $(0,\infty)\times\RR^{1+k}$ of the variables $(\Delta t,\beta_\Sigma,\beta_1,\ldots,\beta_k)$:
\begin{align*}
\delta
&= \sqrt{\Delta t},\quad
x = \frac{1}{\delta} \left(\beta_\Sigma-\bar\beta_\Sigma \right),\quad
y_i
=\frac{1}{\delta} \left(\beta_i - \bar\beta_i \right),\quad i=1,\ldots,k.
\end{align*}
After inserting this change of variables into \eqref{eqn:thm:competition:system:1}--\eqref{eqn:thm:competition:system:3}, simplifying, and multiplying the resulting equations by convenient nonzero terms, it follows that for any $0<\Delta t = \delta^2$, the original system \eqref{eqn:thm:competition:system:1}--\eqref{eqn:thm:competition:system:3} is equivalent to the system
\begin{align}
\label{eqn:lem:system:pf:transformed system:equalities}
\begin{split}
0
&= h_0(\delta,x,y_1,\ldots,y_k),\\
0
&= h_i(\delta,x,y_1,\ldots,y_k),\quad i=1,\ldots,k,
\end{split}\\
\label{eqn:lem:system:pf:transformed system:inequality}
\begin{split}
-\bar\beta_\Sigma
&< \delta x,\\
-\frac{\sigma_K^2}{\sigma_S^2}
&< \delta \bar\beta_i(x + k y_i) + \delta^2 xy_i 
\leq 0,\quad i=1,\ldots,k,
\end{split}
\end{align}
where
\begin{align*}
h_0(\delta,x,y_1,\ldots,y_k)
&= x - \sum_{j=1}^k y_j,\\
h_i(\delta,x,y_1,\ldots,y_k)
&= (x+k y_i)^2- k^\frac{1}{2} (1+k) \gamma_i \left(\frac{\sigma_K}{\sigma_S}\right)^3+ (\ast)\,\delta  + (\ast)\,\delta^2 + (\ast)\,\delta^3 + (\ast)\,\delta^4.
\end{align*}
Here, the $(\ast)$-terms stand for generic polynomials in $x,y_1,\ldots,y_k$, which do not depend on $\delta$ and are not important for the subsequent calculations.

Second, we show that the transformed system \eqref{eqn:lem:system:pf:transformed system:equalities}--\eqref{eqn:lem:system:pf:transformed system:inequality} has a solution $(x(\delta),y_1(\delta),\ldots,y_k(\delta))$ in a neighborhood of $\delta = 0$. It is readily verified that
\begin{align}
\label{eqn:lem:system:pf:first-order terms}
\begin{split}
\bar{x}
&=-\frac{(1+k)^{1/2}}{2k^{3/4}}\sum_{j=1}^k\gamma_j^{1/2}\left(\frac{\sigma_K}{\sigma_S}\right)^{3/2},\\
\bar{y}_i
&=-\frac{(1+k)^{1/2}}{2k^{3/4}}\left(2\gamma_i^{1/2}-\frac{1}{k}\sum_{j=1}^k \gamma_j^{1/2}\right)\left(\frac{\sigma_K}{\sigma_S}\right)^{3/2}, \quad i=1,\ldots,k,
\end{split}
\end{align}
is a solution to the quadratic system that arises from \eqref{eqn:lem:system:pf:transformed system:equalities} by inserting $\delta = 0$. Moreover, the Jacobian of $(h_0,h_1,\ldots,h_k)$ with respect to the variables $(x,y_1,\ldots,y_k)$, evaluated at $\delta = 0$ and $(x,y_1,\ldots,y_k) = (\bar x, \bar y_1,\ldots,\bar y_k)$, is 
\begin{align*}
\begin{pmatrix}
1      & -1         & -1         & \cdots & -1\\
a_1    & ka_1       &            &        &   \\
a_2    &            & ka_2       &        &   \\
\vdots &            &            & \ddots &   \\
a_k    &            &            &        & ka_k
\end{pmatrix},
\end{align*}
where $a_i = 2(\bar x+k\bar y_i) < 0$ and zero entries are omitted. Using row (or column) transformations, one can verify that this matrix is invertible. Therefore, the implicit function theorem yields an $\varepsilon' >0$ and a continuously differentiable function $(x, y_1,\ldots, y_k):(-\varepsilon',\varepsilon') \to \RR^{1+k}$ such that $(x(\delta),y_1(\delta),\ldots, y_k(\delta))$ solves \eqref{eqn:lem:system:pf:transformed system:equalities} for all $\delta\in(-\varepsilon',\varepsilon')$ and $x(0) = \bar x$ and $y_i(0) = \bar y_i$ for $i=1,\ldots,k$. Since $\bar x + k \bar y_i < 0$, making $\varepsilon'$ smaller if necessary, we can also ensure that $(x(\delta),y_1(\delta),\ldots, y_k(\delta))$ satisfies the inequalities \eqref{eqn:lem:system:pf:transformed system:inequality} for all $\delta \in (-\varepsilon',\varepsilon')$.

Third, after reverting the change of variables and setting $\varepsilon := (\varepsilon')^2$, we can conclude that for any $\Delta t \in [0,\varepsilon)$,
\begin{align*}
\beta_\Sigma(\Delta t)
&:= k^{\frac{1}{2}}\frac{\sigma_K}{\sigma_S} + x(\sqrt{\Delta t}) \sqrt{\Delta t},\quad
\beta_i(\Delta t)
:= k^{-\frac{1}{2}}\frac{\sigma_K}{\sigma_S} + y_i(\sqrt{\Delta t}) \sqrt{\Delta t},\quad i=1,\ldots,k,
\end{align*}
defines a solution to the original system \eqref{eqn:thm:competition:system:1}--\eqref{eqn:thm:competition:system:3}. Moreover, the functions $\beta_\Sigma$, $\beta_i$, $i=1,\ldots,k$, are continuously differentiable on $(0,\varepsilon)$ and continuous on $[0,\varepsilon)$ and, in view of the expressions \eqref{eqn:lem:system:pf:first-order terms}, have the asymptotic expansions \eqref{eqn:thm:competition:beta}--\eqref{eqn:thm:competition:beta sum}.\footnote{Asymptotic expansions for $\beta_\Sigma,\beta_1,\ldots,\beta_k$ up to order $O(\Delta t)$ can be obtained by computing the derivatives $x'(0), y_1'(0),\ldots,y_k'(0)$ by means of the implicit function theorem.}

\emph{Step 3.} We finally address the uniqueness of a solution for small $\Delta t>0$. Denote the right-hand side of \eqref{eqn:thm:competition:system:2} by $g(\beta_\Sigma,\beta_i,\Delta t)$, i.e.,
\begin{align*}
g(\beta_\Sigma,\beta_i,\Delta t)
&= (1-\rho_i\Delta t) \beta_i^2 \beta_\Sigma^2 -(2-\rho_i \Delta  t + \beta_\Sigma  \gamma_i \Delta t)\Big(\frac{\sigma_K}{\sigma_S}\Big)^2  \beta_i \beta_\Sigma + \Big(\frac{\sigma_K}{\sigma_S}\Big)^4 (1-\beta_i \gamma_i \Delta t).
\end{align*}

First, we show that for any $\Delta t>0$ and $\beta_\Sigma >0$, there is a unique $\wh\beta_i = u_i(\beta_\Sigma,\Delta t)$ satisfying $g(\beta_\Sigma,\wh\beta_i,\Delta t) = 0$ and $0<\wh\beta_i\beta_\Sigma < \sigma_K^2/\sigma_S^2$. For any $\Delta t > 0$ and $\beta_\Sigma>0$, the discriminant of the quadratic function $g(\beta_\Sigma,\cdot,\Delta t)$ is positive:
\begin{align*}
4\beta_\Sigma\gamma_i\left(\frac{\sigma_K^2}{\sigma_S^2}+\beta_\Sigma^2\right)\frac{\sigma_K^4}{\sigma_S^4}\Delta t + \Big(\gamma_i \frac{\sigma_K^2}{\sigma_S^2} + \beta_\Sigma(\beta_\Sigma \gamma_i - \rho_i)\Big)^2 \frac{\sigma_K^4}{\sigma_S^4} (\Delta t)^2 > 0.
\end{align*}
Hence, $g(\beta_\Sigma,\cdot,\Delta t)$ has two distinct real roots. Since furthermore
\begin{align*}
g(\beta_\Sigma,0,\Delta t)
&=\frac{\sigma_K^4}{\sigma_S^4}>0,\\
g\Big(\beta_\Sigma,\frac{\sigma_K^2}{\beta_\Sigma \sigma_S^2},\Delta t \Big)
&=-\gamma_i\left(\frac{\sigma_K^2}{\sigma_S^2}+\beta_\Sigma^2\right)\frac{\sigma_K^4}{\beta_\Sigma\sigma_S^4}\Delta t<0,\\
\lim_{\beta_i\to\infty}
g\Big(\beta_\Sigma,\beta_i,\Delta t \Big)
&=\infty,
\end{align*}
the assertion follows from the intermediate value theorem.

Next, we argue that for any $\Delta t > 0$ such that $\rho_i\Delta t < 1$, the function $u_i(\cdot,\Delta t)$ is decreasing on $(0,\infty)$. As $u_i(\beta_\Sigma,\Delta t)$ is the (smaller) solution of a quadratic equation, we have an explicit formula for $u_i(\beta_\Sigma,\Delta t)$. By direct computations and simplifications, one can show that its partial derivative $\frac{\partial u_i}{\partial \beta_\Sigma}(\beta_\Sigma,\Delta t)$ is negative for $\beta_\Sigma > 0$ and $0<\rho_i\Delta t < 1$.

Finally, fix $\Delta t > 0$ small enough and let $(\beta_\Sigma,\beta_1,\ldots,\beta_k)$ and $(\beta_\Sigma',\beta_1',\ldots,\beta_k')$ be two solutions to \eqref{eqn:thm:competition:system:1}--\eqref{eqn:thm:competition:system:3}.
In particular, $\beta_i = u_i(\beta_\Sigma)$ and $\beta_i' = u_i(\beta_\Sigma')$ for all $i=1,\ldots,k$. We may assume without loss of generality that $\beta_\Sigma \leq \beta_\Sigma'$. Then, by the above, $\beta_i' =u_i(\beta_\Sigma') \leq u_i(\beta_\Sigma)=\beta_i$ for $i=1,\ldots,k$. Therefore, $\beta_\Sigma' = \beta_1'+\cdots+\beta_k' \leq \beta_1+\cdots+\beta_k = \beta_\Sigma$. We conclude that $\beta_\Sigma = \beta_\Sigma'$ and $\beta_i = \beta_i'$ for all $i=1,\ldots,k$.
\end{proof}

Let $(\vec\beta,\beta_\Sigma) = (\beta_1,\ldots,\beta_k,\beta_\Sigma)$ be as in Theorem~\ref{thm:competition}~(i). We now turn to the second part of Theorem~\ref{thm:competition}. We first note that $\varphi_i = 1-\frac{\lambda\beta_i}{1-\lambda\beta_\Sigma}$ is well defined since $\lambda\beta_\Sigma \in[0,1)$ by the definition of $\lambda$. Moreover, writing $\varphi_i = \frac{1-\lambda(\beta_\Sigma+\beta_i)}{1-\lambda\beta_\Sigma}$ shows that $\varphi_i < 1$. The expansions \eqref{eqn:thm:competition:lambda}--\eqref{eqn:thm:competition:mu} follow from the corresponding expansions for $\beta_i$ and $\beta_\Sigma$. In particular, the expansion for $\varphi_i$ shows that $\varphi_i > 0$ for $\Delta t>0$ small enough. In view of Lemma~\ref{lem:admissible:general}, this implies the admissibility of the strategies $\Delta M^i$ defined in \eqref{eqn:competition:inventory prediction process}:

\begin{lemma}
\label{lem:admissible:candidate}
For $\Delta t>0$ small enough, $\Delta M^i \in \mathcal{A}$ for all $i=1,\ldots,k$.
\end{lemma}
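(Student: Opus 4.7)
The plan is to reduce the claim to an invocation of Lemma~\ref{lem:admissible:general} (whose hypothesis essentially requires the mean-reversion coefficient of the autoregressive candidate to lie in $(0,1)$) by first checking $\varphi_i \in (0,1)$ for sufficiently small $\Delta t>0$. Both bounds are already in the text immediately preceding the lemma: the identity $\varphi_i = (1-\lambda(\beta_\Sigma+\beta_i))/(1-\lambda\beta_\Sigma)$ together with $\lambda\beta_i>0$ and $\lambda\beta_\Sigma<1$ yields $\varphi_i < 1$, while the expansion \eqref{eqn:thm:competition:phi}, namely $\varphi_i = \frac{(1+k)^{1/2}}{k^{1/4}}\gamma_i^{1/2}(\sigma_K/\sigma_S)^{1/2}\sqrt{\Delta t} + O(\Delta t)$, gives $\varphi_i > 0$ once $\Delta t$ is small enough. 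So the first step is to record these two inequalities and fix a threshold $\varepsilon_i>0$ below which $\varphi_i \in (0,1)$ holds for every $i=1,\ldots,k$; then set the global threshold to $\min_i \varepsilon_i$.

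Next, I would spell out the $L^2$-stability argument that Lemma~\ref{lem:admissible:general} encapsulates. Rewriting the defining recursion \eqref{eqn:competition:inventory prediction process} as $M^i_n = (1-\varphi_i)M^i_{n-1} + \beta_i \Delta S_n$, the innovations $\beta_i \Delta S_n$ are i.i.d.\ Gaussian with variance $\beta_i^2 \sigma_S^2 \Delta t$ and are independent of $M^i_{n-1}$. Squaring and taking expectations gives the scalar recursion
\begin{equation*}
\EE[(M^i_n)^2] = (1-\varphi_i)^2\, \EE[(M^i_{n-1})^2] + \beta_i^2 \sigma_S^2 \Delta t,
\end{equation*}
which, since $(1-\varphi_i)^2<1$, is a contraction converging to the stationary value $\beta_i^2 \sigma_S^2 \Delta t / (1-(1-\varphi_i)^2)$. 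In particular $\EE[(M^i_n)^2]$ is bounded in $n$, which is exactly the admissibility condition defining $\mathcal{A}$.

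There is essentially no obstacle here beyond bookkeeping: the only subtlety is making sure $\varphi_i$ is strictly positive (needed for the contraction), and this is already guaranteed by the sign information in \eqref{eqn:thm:competition:phi}. The case $\varphi_i = 1$ is harmless but the upper bound $\varphi_i<1$ is immediate from the displayed identity, so no borderline case needs separate treatment. The argument therefore reduces to a one-line application of Lemma~\ref{lem:admissible:general} after the sign of $\varphi_i$ has been verified.
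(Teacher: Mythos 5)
Your proof is correct and follows the same route as the paper: verify the sign of $\varphi_i$ from the expansion \eqref{eqn:thm:competition:phi} and the bound $\varphi_i<1$ from the identity $\varphi_i = \bigl(1-\lambda(\beta_\Sigma+\beta_i)\bigr)/(1-\lambda\beta_\Sigma)$, then invoke Lemma~\ref{lem:admissible:general} (whose actual hypothesis is $\varphi\in(0,2)$, not $(0,1)$ as you state, but your stronger bound covers it). The $L^2$-recursion you spell out is just an equivalent rederivation of the proof of Lemma~\ref{lem:admissible:general}, which the paper obtains via the explicit moving-average representation of $M_n$.
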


Next, we show that for the pricing rule and strategies defined in Theorem~\ref{thm:competition}, the dealers' zero profit condition holds.

\begin{lemma}
\label{lem:zero profit}
Suppose that the dealers use the pricing rule $(\lambda,\vec\mu,\vec\beta,\vec\varphi)$ defined in Theorem~\ref{thm:competition} and that each HFT $i$ uses the strategy $\Delta M^i$ as defined in \eqref{eqn:competition:inventory prediction process}. Then the zero profit condition \eqref{eqn:def:competition:equilibrium:zero profit condition} holds.
\end{lemma}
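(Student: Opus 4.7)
The plan is to compute the dealers' conditional expectation $\wt\EE_n[\Delta S_n]$ by Gaussian regression and then verify that the projection coefficient is exactly the $\lambda$ defined in \eqref{eqn:thm:competition:lambda}, with $\mu_j = \lambda \varphi_j$ by \eqref{eqn:thm:competition:mu} absorbing the inventory-prediction terms.

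First, I would note that when each HFT $j$ uses the strategy $\Delta L^j_n = \Delta M^j_n = \beta_j \Delta S_n - \varphi_j M^j_{n-1}$, the inventory prediction $M^j_{n-1}$ is a deterministic function of the past value increments $\{\Delta S_m : m < n\}$ (via the recursion \eqref{eqn:competition:inventory prediction process} with $M^j_0 = L^j_0$). In particular, the dealers, whose information set just before round $n$ contains $\{\Delta S_m : m < n\}$ and $\{\Delta Y_m : m \leq n\}$, know each $M^j_{n-1}$ exactly. Thus
\[
\Delta Y_n \;=\; \Delta K_n + \sum_{j=1}^k \Delta M^j_n \;=\; \beta_\Sigma \Delta S_n + \Delta K_n - \sum_{j=1}^k \varphi_j M^j_{n-1},
\]
and conditioning on $\Delta Y_n$ given the dealers' past information is equivalent to conditioning on the centered variable $Z := \beta_\Sigma \Delta S_n + \Delta K_n$.

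Second, since $\Delta S_n$ and $\Delta K_n$ are independent, centered, normal, and independent of the dealers' past information, standard Gaussian projection yields
\[
\wt\EE_n[\Delta S_n] \;=\; \EE[\Delta S_n \mid Z] \;=\; \frac{\operatorname{Cov}(\Delta S_n,Z)}{\operatorname{Var}(Z)}\, Z \;=\; \frac{\beta_\Sigma \sigma_S^2}{\beta_\Sigma^2 \sigma_S^2 + \sigma_K^2}\, Z.
\]
The regression coefficient is exactly $\lambda$ as defined in \eqref{eqn:thm:competition:lambda}. Substituting $Z = \Delta Y_n + \sum_j \varphi_j M^j_{n-1}$ and using $\mu_j = \lambda \varphi_j$ from \eqref{eqn:thm:competition:mu}, I obtain
\[
\wt\EE_n[\Delta S_n] \;=\; \lambda \Delta Y_n + \sum_{j=1}^k \lambda \varphi_j M^j_{n-1} \;=\; \lambda \Delta Y_n + \sum_{j=1}^k \mu_j M^j_{n-1},
\]
which is precisely the zero-profit condition \eqref{eqn:def:competition:equilibrium:zero profit condition}.

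There is no serious obstacle here; the result follows directly from the definitions of $\lambda$ and $\mu_j$ together with the Gaussian projection formula. The only subtlety is bookkeeping the information structure, namely the observation that the $M^j_{n-1}$ are measurable with respect to the dealers' past information and therefore do not contribute to the residual variance when regressing $\Delta S_n$ on $\Delta Y_n$.
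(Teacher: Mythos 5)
Your proposal is correct and follows essentially the same route as the paper's proof: isolate the centered Gaussian variable $X_n=\Delta K_n+\beta_\Sigma\Delta S_n=\Delta Y_n+\sum_j\varphi_j M^j_{n-1}$ (observable to the dealers since the $M^j_{n-1}$ are determined by past value increments), apply the bivariate normal projection formula to get the coefficient $\beta_\Sigma\sigma_S^2/(\sigma_K^2+\beta_\Sigma^2\sigma_S^2)=\lambda$, and conclude via $\mu_j=\lambda\varphi_j$. No discrepancies to report.
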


\begin{proof}
In view of the definition of $\Delta M^i$ in \eqref{eqn:competition:inventory prediction process},
\begin{align*}
\Delta Y_n
&= \Delta K_n + \sum_{j=1}^k \Delta M^j_n
= \Delta K_n + \beta_\Sigma \Delta S_n - \sum_{j=1}^k \varphi_j M^j_{n-1},
\end{align*}
where $\beta_\Sigma = \sum_{j=1}^k \beta_j$. Since $\Delta K_n$ and $\Delta S_n$ are independent normally distributed random variables, the random vector $(\Delta S_n, X_n)$, where
\begin{align*}
X_n
&:= \Delta Y_n + \sum_{j=1}^k \varphi_j M^j_{n-1}
= \Delta K_n + \beta_\Sigma \Delta S_n,
\end{align*}
has a bivariate normal distribution. Moreover, $X_n$ is observable with respect to the dealers information set at time $n$. Therefore, by the formula for the conditional marginal mean of a bivariate normal random vector, we find that
\begin{align*}
\wt \EE_n[\Delta S_n]
&= \frac{\mathrm{Cov}(\Delta S_n,X_n)}{\mathrm{Var}(X_n)}X_n
= \frac{\beta_\Sigma \sigma_S^2}{\sigma_K^2 + \beta_\Sigma^2 \sigma_S^2}\big(\Delta Y_n + \sum_{j=1}^k \varphi_j M^j_{n-1}\big).
\end{align*}
The zero profit condition \eqref{eqn:def:competition:equilibrium:zero profit condition} now follows from the definitions of $\lambda$ and $\mu_j$ in \eqref{eqn:thm:competition:lambda} and \eqref{eqn:thm:competition:mu}.
\end{proof}

To complete the proof of Theorem~\ref{thm:competition}~(ii), we need to show that given the dealers' pricing rule $(\lambda,\vec\mu,\vec\beta,\vec\varphi)$, the strategies $(\Delta M^1,\ldots,\Delta M^k)$ form a Nash equilibrium for the HFTs. So fix $i\in\lbrace 1,\ldots,k\rbrace$ and suppose that the dealers use the pricing rule $(\lambda,\vec\mu,\vec\beta,\vec\varphi)$ and that every other HFT $j$, $j \neq i$, uses the strategy 
\begin{align*}
\Delta M^j_n
&= \beta_j \Delta S_n - \varphi_j M^j_{n-1}.
\end{align*}
Plugging these strategies for $j \neq i$ into the goal functional \eqref{eqn:competition:objective} of HFT $i$ and using that $\mu_j = \lambda \varphi_j$ by the definition of $\mu_j$ in \eqref{eqn:thm:competition:mu}, we see that the inventories of the other HFTs disappear:
\begin{align}
\label{eqn:competition:objective:equilibrium:M L}
\EE_1\bigg[ \sum_{n=1}^\infty (1-\rho_i\Delta t)^n \Big\lbrace\big((1-\lambda{\textstyle\sum_{j\neq i}\beta_j})\Delta S_n-\lambda\Delta L^i_n-\mu_i M^i_{n-1}\big)\Delta L^i_n-\frac{\gamma_i\Delta t}{2}(L^i_{n-1}+\Delta L^i_n)^2\Big\rbrace\bigg].
\end{align}
To wit, the individual optimization problem of HFT $i$ in our equilibrium of $k$ competitive HFTs reduces to that of a single HFT who is facing the pricing rule $(\lambda,\mu_i,\beta_i,\varphi_i)$ and is trading a risky asset whose standard deviation of value increments is changed by a factor of $1-\lambda\sum_{j\neq i}\beta_j$.

It will be convenient to represent HFT $i$'s strategy $\Delta L^i$ relative to its prediction $\Delta M^i$. We thus consider a new state variable $Z^i$, defined via
\begin{align}
\label{eqn:competition:Z}
L^i_n
&= M^i_n + Z^i_n,
\end{align}
that keeps track of the deviation of the actual inventory $L^i$ of HFT $i$ from the corresponding prediction $M^i$. It is clearly equivalent to control either $\Delta L^i\in\mathcal{A}$ or $\Delta Z^i\in\mathcal{A}$ (note that $\mathcal{A}$ is a vector space and that $\Delta M^i \in \mathcal{A}$ by Lemma~\ref{lem:admissible:candidate}), and we need to show that $\Delta Z^i\equiv 0$ is optimal. Substituting \eqref{eqn:competition:Z} into \eqref{eqn:competition:objective:equilibrium:M L} and using again the definitions of $\Delta M^i$ and $\mu_i$ yields the following reduced optimization problem:
\begin{align}
\label{eqn:competition:objective:equilibrium:reduced}
\mathcal{J}(M^i_0,\Delta S_1,Z^i_0;\Delta Z^i)
= \EE_1\bigg[ \sum_{n=1}^\infty (1-\rho_i\Delta t)^n f(M^i_{n-1},\Delta S_n, Z^i_{n-1};\Delta Z^i_n)\bigg]\to \max_{\Delta Z^i \in \mathcal{A}}!
\end{align}
where $f:\RR^4\to\RR$ is given by
\begin{align*}
f(M,\Delta S,Z;\Delta Z)
&=
(\eta\Delta S -\lambda \Delta Z)(\beta_i\Delta S - \varphi_i M + \Delta Z) - \frac{\gamma_i\Delta t}{2}(\beta_i\Delta S + (1-\varphi_i)M+Z + \Delta Z)^2
\end{align*}
and
\begin{align}
\label{eqn:eta}
\eta
&= 1-\lambda\beta_\Sigma.
\end{align}

The next lemma provides the value function for \eqref{eqn:competition:objective:equilibrium:reduced} and shows that the optimal feedback control is of the form $\Delta Z^i = -\zeta_i Z^i$ for some $\zeta_i\in(0,1)$. In particular, since $M^i_0 = L^i_0$, we have $Z^i_0 = 0$, so that $\Delta Z^i \equiv 0$ is optimal for \eqref{eqn:competition:objective:equilibrium:reduced}.\footnote{Note that HFT $i$'s optimal strategy is mean-reverting to zero in $Z^i$. To wit, if the dealers' initial inventory prediction is incorrect, the HFT has an incentive to gradually align her actual inventory with the dealers' prediction. In this sense, the equilibrium is stable with respect to mispredicted inventories.} As a consequence, $\Delta L^i = \Delta M^i$ is an optimal strategy for HFT $i$. This proves Proposition~\ref{prop:HFT} and completes, together with Lemmas~\ref{lem:admissible:candidate}--\ref{lem:zero profit}, the proof of Theorem~\ref{thm:competition} (ii).\footnote{The asymptotic expansions for $\lambda$, $\varphi_i$, and $\mu_i$ follow directly from \eqref{eqn:thm:competition:beta}--\eqref{eqn:thm:competition:beta sum}.}

\begin{lemma}
\label{lem:DPE}
Fix $i \in \lbrace 1,\ldots,k \rbrace$, and let $\beta_i,\beta_\Sigma,\lambda,\varphi_i,A_i,B_i,C_i,D_i$ be defined as in Theorem~\ref{thm:competition} and Proposition~\ref{prop:HFT} (for $\Delta t>0$ sufficiently small). 
\begin{enumerate}
\item There are unique $E_i>0$ and $\zeta_i >0$ such that
\begin{align}
\label{eqn:lem:DPE:E and zeta}
\frac{E_i}{1-\rho_i\Delta t}
&= 2\lambda \zeta_i
\quad\text{and}\quad
\zeta_i
= \frac{E_i+\gamma_i\Delta t}{E_i+\gamma_i\Delta t+2\lambda}.
\end{align}

\item For $\Delta t>0$ sufficiently small, define $F_i$ and $G_i$ by
\begin{align}
\label{eqn:lem:DPE:F}
\frac{F_i}{1-\rho_i\Delta t}
&= \frac{\lambda \varphi_i\zeta_i+(1-\zeta_i)(1-\varphi_i)\gamma_i\Delta t}{\varphi_i + (1-\varphi_i) (\zeta_i (1- \rho_i \Delta t)+\rho_i\Delta t)},\\
\label{eqn:lem:DPE:G}
\frac{G_i}{1-\rho_i\Delta t}
&= -\beta_i  (1-\zeta_i)(F_i+\gamma_i\Delta t) +\zeta_i (\lambda  \beta_i - \eta).
\end{align}
Then, the function
\begin{align*}
v(M,\Delta S,Z)
&= -\frac{A_i}{2}M^2 + \frac{B_i}{2}(\Delta S)^2 - C_i M \Delta S + D_i - \frac{E_i}{2}Z^2 - F_i M Z + G_i \Delta S Z
\end{align*}
is a solution of the dynamic programming equation (DPE)
\begin{align}
\label{eqn:lem:DPE}
\frac{v(M,\Delta S,Z)}{1-\rho_i\Delta t}
&= \sup_{\Delta Z\in \RR} \lbrace f(M, \Delta S, Z;\Delta Z) + \EE[v(m,\sigma_S \sqrt{\Delta t} X,z)]\vert_{m = \beta_i \Delta S +(1-\varphi_i) M,\, z = Z+\Delta Z} \rbrace,
\end{align}
where $(M,\Delta S,Z) \in \RR^3$ and $X$ is a standard normal random variable. Moreover, the supremum on the right-hand side is attained at $\Delta Z = - \zeta_i Z$.

\item Define $\Delta \wh Z^i = (\Delta Z^i_n)_{n\geq 1}$ by $\wh Z^i_0 = Z^i_0$ and $\Delta \wh Z^i_n = -\zeta_i \wh Z^i_{n-1}$. Then, $\Delta \wh Z^i \in \mathcal{A}$ is an optimizer for \eqref{eqn:competition:objective:equilibrium:reduced} and the maximum is given by $v(M^i_0,\Delta S_1,Z^i_0)$.
\end{enumerate}
\end{lemma}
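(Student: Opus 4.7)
The plan is to eliminate $E_i$ between the two relations in \eqref{eqn:lem:DPE:E and zeta}. Substituting $E_i=2\lambda(1-\rho_i\Delta t)\zeta_i$ from the first relation into the second and clearing denominators reduces the pair to a single quadratic
\begin{equation*}
2\lambda(1-\rho_i\Delta t)\zeta_i^2+(\gamma_i\Delta t+2\lambda\rho_i\Delta t)\zeta_i-\gamma_i\Delta t=0.
\end{equation*}
Its leading coefficient is positive (using $\rho_i\Delta t<1$), the parabola takes value $-\gamma_i\Delta t<0$ at $\zeta_i=0$ and $2\lambda>0$ at $\zeta_i=1$, so there is a unique positive root $\zeta_i\in(0,1)$, from which $E_i=2\lambda(1-\rho_i\Delta t)\zeta_i>0$ is uniquely determined.

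\textbf{Part (ii).} My approach is a direct DPE verification with the quadratic ansatz $v$. First I would compute $W(m,z):=\EE[v(m,\sigma_S\sqrt{\Delta t}X,z)]$; since $\EE[X]=0$ and $\EE[X^2]=1$, the $\Delta S$-linear and $\Delta S Z$-terms integrate out and $W$ is a simple quadratic polynomial in $(m,z)$. The RHS of \eqref{eqn:lem:DPE} then becomes $f(M,\Delta S,Z;\Delta Z)+W(\beta_i\Delta S+(1-\varphi_i)M,\,Z+\Delta Z)$, a concave quadratic in $\Delta Z$ with leading coefficient $-(2\lambda+\gamma_i\Delta t+E_i)/2<0$, so the supremum is attained at the unique root of the first-order condition. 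Differentiating in $\Delta Z$ and collecting terms gives an affine expression in $(M,\Delta S,Z)$:
\begin{equation*}
(2\lambda+\gamma_i\Delta t+E_i)\Delta Z^*=(\eta-\beta_i(\lambda+\gamma_i\Delta t+F_i))\Delta S+(\lambda\varphi_i-(1-\varphi_i)(\gamma_i\Delta t+F_i))M-(\gamma_i\Delta t+E_i)Z.
\end{equation*}
For this to collapse to $\Delta Z^*=-\zeta_i Z$, the $\Delta S$- and $M$-coefficients in the numerator must vanish, and the $Z$-coefficient must produce the second relation of \eqref{eqn:lem:DPE:E and zeta}. I would then substitute $\Delta Z^*=-\zeta_i Z$ back into the DPE and match the remaining seven coefficients ($M^2$, $(\Delta S)^2$, $M\Delta S$, $Z^2$, $MZ$, $\Delta S Z$, constant). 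The $Z^2$ identity reproduces \eqref{eqn:lem:DPE:E and zeta}; the $MZ$ and $\Delta S Z$ identities, combined with the FOC-vanishing conditions and the relation $\lambda\beta_i=\eta(1-\varphi_i)$ encoded in \eqref{eqn:thm:competition:phi}, deliver the closed forms \eqref{eqn:lem:DPE:F}--\eqref{eqn:lem:DPE:G} for $F_i,G_i$; and the $M^2$, $M\Delta S$, $(\Delta S)^2$, and constant equations give $A_i,C_i,B_i,D_i$ as recorded in Proposition~\ref{prop:HFT}, where collapsing the $B_i$ and $D_i$ identities to the stated closed forms will require the quartic relation \eqref{eqn:thm:competition:system:2}.

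\textbf{Part (iii).} I will apply a standard dynamic-programming verification. Since $\zeta_i\in(0,1)$, the candidate $\wh Z^i_n=(1-\zeta_i)^n Z^i_0$ is bounded, so $\Delta\wh Z^i\in\mathcal{A}$. For an arbitrary admissible $\Delta Z^i\in\mathcal{A}$, the DPE \eqref{eqn:lem:DPE} gives a pointwise inequality at state $(M^i_{n-1},\Delta S_n,Z^i_{n-1})$ with equality precisely at the optimizer; taking $\EE_1$, multiplying by $(1-\rho_i\Delta t)^{n-1}$, and summing telescopically over $n=1,\ldots,N$ yields
\begin{equation*}
v(M^i_0,\Delta S_1,Z^i_0)\geq\EE_1\Big[\sum_{n=1}^N(1-\rho_i\Delta t)^n f(M^i_{n-1},\Delta S_n,Z^i_{n-1};\Delta Z^i_n)\Big]+(1-\rho_i\Delta t)^N\EE_1[v(M^i_N,\Delta S_{N+1},Z^i_N)].
\end{equation*}
Admissibility bounds $\sup_n\EE[(L^i_n)^2]$; combined with Lemma~\ref{lem:admissible:general} (giving $\sup_n\EE[(M^i_n)^2]<\infty$), I deduce $\sup_n\EE[(Z^i_n)^2]<\infty$. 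Since $v$ grows at most quadratically and $\EE[(\Delta S_{N+1})^2]=\sigma_S^2\Delta t$, the tail expectation is uniformly bounded in $N$, and the geometric factor $(1-\rho_i\Delta t)^N$ sends it to zero as $N\to\infty$. This yields $\mathcal{J}(M^i_0,\Delta S_1,Z^i_0;\Delta Z^i)\leq v(M^i_0,\Delta S_1,Z^i_0)$ with equality at $\Delta\wh Z^i$, completing the argument.

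The hard part will be the coefficient-matching bookkeeping in part (ii): the seven algebraic identities must be shown simultaneously consistent, and collapsing them to the closed forms in \eqref{eqn:lem:DPE:F}--\eqref{eqn:lem:DPE:G} and in Proposition~\ref{prop:HFT} leans crucially on \eqref{eqn:thm:competition:system:2} and \eqref{eqn:thm:competition:phi}. Parts (i) and (iii) are essentially routine once the ansatz and its optimizer are identified.
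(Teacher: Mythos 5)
Your proposal is correct and follows essentially the same route as the paper's proof: a direct verification of the quadratic ansatz against the DPE, with the first-order condition collapsing to $\Delta Z=-\zeta_i Z$ once the $M$- and $\Delta S$-coefficients vanish (the paper isolates this as the identity $F_i+\gamma_i\Delta t=\lambda\varphi_i/(1-\varphi_i)$, Lemma~\ref{lem:F}, derived from the quartic \eqref{eqn:thm:competition:system:2}), followed by a standard telescoping verification for part (iii). The only cosmetic differences are that you eliminate $E_i$ to get a quadratic in $\zeta_i$ where the paper eliminates $\zeta_i$, and that the quartic is actually needed for the FOC-vanishing step rather than for the $B_i,D_i$ identities (which hold by definition), but neither affects the validity of the argument.
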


\begin{proof}
To ease the notation, we drop all sub- and superscripts ``$i$'' in the proof.

(i): After eliminating $\zeta$, the system reduces to a quadratic equation in $E$, which turns out to have a unique positive solution.

(ii): First, we show that $\Delta Z = -\zeta Z$ maximizes the supremum on the right-hand side of the DPE \eqref{eqn:lem:DPE} for our candidate value function $v$. Using that
\begin{align*}
E[v(m,\sigma_S \sqrt{\Delta t} X,z)]\vert_{z = Z + \Delta Z}
&=  -\frac{A}{2}m^2 + \frac{B}{2}\sigma_S^2\Delta t + D - \frac{E}{2}(Z+\Delta Z)^2 - Fm(Z+\Delta Z),
\end{align*}
the right-hand side of \eqref{eqn:lem:DPE} simplifies to a concave quadratic function in $\Delta Z$. Solving its first-order condition for $\Delta Z$ yields
\begin{align*}
\Delta Z
&= \frac{\eta - \beta (F+\gamma\Delta t + \lambda)}{E+\gamma\Delta t + 2\lambda}\Delta S
- \frac{(F+\gamma\Delta t)(1-\varphi)-\lambda\varphi}{E+\gamma\Delta t + 2\lambda}M - \frac{E+\gamma\Delta t}{E+\gamma\Delta t + 2\lambda} Z.
\end{align*}
Since the $Z$-coefficient is equal to $-\zeta$, it suffices to show that the other two terms vanish. We show in Lemma~\ref{lem:F} below that $F + \gamma \Delta t = \lambda\varphi/(1-\varphi)$. Hence, the $M$-coefficient vanishes and, moreover, the numerator of the $\Delta S$-coefficient simplifies to $\eta - \beta( \frac{\lambda\varphi}{1-\varphi}+\lambda ) = \eta - \frac{\beta\lambda}{1-\varphi}
= 0$ by  \eqref{eqn:thm:competition:phi} and \eqref{eqn:eta}.

Second, we show that $v$ satisfies the DPE. After inserting the optimizer $\Delta Z = -\zeta Z$ and simplifying, the right-hand side of the DPE \eqref{eqn:lem:DPE} becomes
\begin{align*}
&-\frac{1}{2}\Big[(A + \gamma \Delta t) (1-\varphi)^2\Big] M^2
+ \frac{1}{2}\Big[2\beta\eta-\beta^2(A+\gamma\Delta t)\Big](\Delta S)^2
-\Big[\beta (1-\varphi) (A+\gamma\Delta t)+\varphi \eta\Big]M \Delta S\\
&\quad+\Big[\frac{B}{2}\sigma_S^2\Delta t + D\Big]
-\frac{1}{2}\Big[(E+\gamma\Delta t)(1-\zeta)^2 + 2 \lambda \zeta^2\Big]Z^2
-\Big[(1-\zeta)(1-\varphi)(F+\gamma\Delta t) + \zeta \lambda \varphi \Big] MZ\\
&\quad+ \Big[ -\beta(1-\zeta)(F+\gamma\Delta t) + \zeta(\lambda\beta - \eta)\Big] \Delta S Z.
\end{align*}
Substituting the definitions of $A$, $B$, $C$, $D$, $F$, and $G$ from Proposition~\ref{prop:HFT} and \eqref{eqn:lem:DPE:F}--\eqref{eqn:lem:DPE:G}, one can verify that the coefficients of the terms $M^2$, $(\Delta S)^2$, $M\Delta S$, $MZ$, and $\Delta S Z$ all match those on the left-hand side of the DPE \eqref{eqn:lem:DPE}. It thus remains to consider the $Z^2$-term and show that 
\begin{align*}
-\frac{E}{2(1-\rho\Delta t)}
&=-\frac{1}{2}\Big[(E+\gamma\Delta t)(1-\zeta)^2 + 2 \lambda \zeta^2\Big].
\end{align*}
But after eliminating $\zeta$ using the second equation in \eqref{eqn:lem:DPE:E and zeta}, the right-hand side simplifies to
\begin{align*}
-\frac{E+\gamma\Delta t}{E+\gamma\Delta t + 2\lambda}\lambda
= -\lambda \zeta
= -\frac{E}{2(1-\rho\Delta t)},
\end{align*}
where we use both equations in \eqref{eqn:lem:DPE:E and zeta} for the last two equalities. We conclude that $v$ satisfies the DPE \eqref{eqn:lem:DPE}.

(iii): Let $\Delta Z \in \mathcal{A}$. We need to show that
\begin{align*}
\mathcal{J}(M_0,\Delta S_1,Z_0;\Delta Z)
&\leq v(M_0,\Delta S_1,Z_0)
\end{align*}
with equality if $\Delta Z = \Delta\wh Z$.

As $v$ satisfies the DPE, we have for each $n\geq1$ that
\begin{align*}
&(1-\rho\Delta t)^n f(M_{n-1},\Delta S_n,Z_{n-1};\Delta Z_n)\\
&\quad\leq (1-\rho \Delta t)^{n-1}v(M_{n-1},\Delta S_n,Z_{n-1}) - (1-\rho\Delta t)^n \EE_n[v(M_n,\Delta S_{n+1},Z_n)].
\end{align*}
Here, by the last assertion of (ii), the inequality turns into an equality for $\Delta Z_n =  -\zeta Z_{n-1}$. Taking the expectation $\EE_1[\cdot]$ on both sides and summing over $n=1,\ldots,N$ yields
\begin{align}
\label{eqn:lem:DPE:pf:inequality}
\begin{split}
&\EE_1\Big[\sum_{n=1}^N(1-\rho\Delta t)^n f(M_{n-1},\Delta S_n,Z_{n-1};\Delta Z_n)\Big]\\
&\quad\leq v(M_0,\Delta S_1,Z_0) - (1-\rho\Delta t)^N \EE_1[v(M_N,\Delta S_{N+1},Z_N)].
\end{split}
\end{align}

Next, we want to send $N\to\infty$ in \eqref{eqn:lem:DPE:pf:inequality}. To this end, we first observe that due to the quadratic structure of the functions $v$ and $f$, there is a constant $c>0$ such that for all $(L,\Delta S,Z,\Delta Z)\in\RR^4$,
\begin{align*}
\vert v(M,\Delta S,Z) \vert 
&\leq c(1 + M^2 + (\Delta S)^2 + Z^2),\\
\vert f(M,\Delta S, Z;\Delta Z) \vert
&\leq c (M^2+(\Delta S)^2 +Z^2 + (\Delta Z)^2).
\end{align*}
Moreover, $\Delta M, \Delta Z\in\mathcal{A}$ implies that $\sum_{n=1}^\infty (1-\rho\Delta t)^n (1+M_{n-1}^2+(\Delta S_n)^2 + Z_{n-1}^2 + (\Delta Z_n)^2)$ is integrable. This together with the estimates for $v$ and $f$ implies that the second term on the right-hand side of \eqref{eqn:lem:DPE:pf:inequality} converges to zero as $N\to\infty$ and that the left-hand side of \eqref{eqn:lem:DPE:pf:inequality} converges to $\mathcal{J}(M_0,\Delta S_1,Z_0;\Delta Z)$ as $N\to\infty$. In summary, we obtain
\begin{align*}
\mathcal{J}(M_0,\Delta S_1,Z_0;\Delta Z)
&\leq v(M_0,\Delta S_1,Z_0).
\end{align*}
Furthermore, this inequality turns into an equality for $\Delta Z = \Delta\wh Z$.
\end{proof}

The following identity is used in the proof of Lemma~\ref{lem:DPE}~(ii).

\begin{lemma}
\label{lem:F}
In the setting of Lemma~\ref{lem:DPE}, we have $F_i+\gamma_i\Delta t = \frac{\lambda\varphi_i}{1-\varphi_i}$.
\end{lemma}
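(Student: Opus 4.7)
My plan is to reduce the identity $F_i + \gamma_i\Delta t = \lambda\varphi_i/(1-\varphi_i)$ to a simpler, $\zeta_i$-free equation involving only equilibrium quantities, and then verify that equation using the quartic \eqref{eqn:thm:competition:system:2}.

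First, I would add $\gamma_i\Delta t$ to both sides of the defining relation \eqref{eqn:lem:DPE:F} and combine over the common denominator $D := \varphi_i + (1-\varphi_i)[\zeta_i(1-\rho_i\Delta t)+\rho_i\Delta t]$. The $\gamma_i\Delta t$-contributions in the new numerator telescope: after expanding, the two copies of $(1-\rho_i\Delta t)(1-\varphi_i)\gamma_i\Delta t$ from $(1-\zeta_i)$ and $\zeta_i$ combine to $(1-\rho_i\Delta t)(1-\varphi_i)\gamma_i\Delta t$, and together with $\varphi_i\gamma_i\Delta t + (1-\varphi_i)\rho_i\Delta t\gamma_i\Delta t$ this collapses to $\gamma_i\Delta t$. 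Hence the numerator reduces to $(1-\rho_i\Delta t)\lambda\varphi_i\zeta_i + \gamma_i\Delta t$. Next, I would cross-multiply the claimed equality against $(1-\varphi_i)D$: the $\zeta_i$-containing terms $(1-\varphi_i)(1-\rho_i\Delta t)\lambda\varphi_i\zeta_i$ appear on both sides and cancel, leaving the $\zeta_i$-free identity
\begin{equation*}
(1-\varphi_i)\gamma_i\Delta t = \lambda\varphi_i\bigl[1-(1-\rho_i\Delta t)(1-\varphi_i)\bigr],
\end{equation*}
where I use the rewriting $\varphi_i + (1-\varphi_i)\rho_i\Delta t = 1 - (1-\rho_i\Delta t)(1-\varphi_i)$.

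Finally, I would establish this reduced identity from the quartic \eqref{eqn:thm:competition:system:2}. The two key substitutions are $\lambda\beta_i = \eta(1-\varphi_i)$ where $\eta := 1-\lambda\beta_\Sigma$ (immediate from \eqref{eqn:thm:competition:phi}), and $(\sigma_K/\sigma_S)^2 = \beta_\Sigma\eta/\lambda$ (a direct rearrangement of \eqref{eqn:thm:competition:lambda}). Substituting both into \eqref{eqn:thm:competition:system:2}, dividing by $\eta^2$ (nonzero for small $\Delta t$ since $\lambda\beta_\Sigma \to k/(k+1) < 1$), and using $2-\rho_i\Delta t = 1+(1-\rho_i\Delta t)$ converts the quartic into an equation in $\varphi_i$, $\beta_i$, $\beta_\Sigma$, $\rho_i$, $\gamma_i$, $\Delta t$. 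Collecting the inventory-free terms $\alpha(1-\varphi_i)^2-(1+\alpha)(1-\varphi_i)+1$ with $\alpha := 1-\rho_i\Delta t$ factors them as $\varphi_i[1-(1-\rho_i\Delta t)(1-\varphi_i)]$, while the $\gamma_i\Delta t$-terms combine to $\gamma_i\Delta t[\beta_\Sigma(1-\varphi_i)+\beta_i]$. Multiplying through by $\lambda$ and invoking $\lambda\beta_i = \eta(1-\varphi_i)$ once more together with $\lambda\beta_\Sigma + \eta = 1$ collapses the right-hand side to $(1-\varphi_i)\gamma_i\Delta t$, giving the required identity.

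The main obstacle is recognizing that after the $\zeta_i$-dependence drops out, what remains is precisely the quartic equation re-expressed in the $(\varphi_i,\beta_i,\beta_\Sigma)$ variables; once that reformulation is spotted, the remainder is careful algebraic bookkeeping.
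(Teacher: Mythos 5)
Your proposal is correct and follows essentially the same route as the paper: both reduce the claim, via the definition of $F_i$ in \eqref{eqn:lem:DPE:F} (with the $\zeta_i$-terms cancelling), to the $\zeta_i$-free identity $(1-\varphi_i)\gamma_i\Delta t = \lambda\varphi_i\bigl[1-(1-\rho_i\Delta t)(1-\varphi_i)\bigr]$, and then identify this with the quartic \eqref{eqn:thm:competition:system:2} through the definitions of $\varphi_i$ and $\lambda$. The only cosmetic difference is that you run the second half from the quartic to the identity rather than the other way around, and you spell out the cancellations that the paper compresses into a chain of equalities up to nonzero factors.
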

\begin{proof}
This follows from equation \eqref{eqn:thm:competition:system:2} via a straightforward, but tedious calculation. We drop the subscripts as in the previous proof. Using successively the definitions of $F$ in \eqref{eqn:lem:DPE:F}, $\varphi$ in \eqref{eqn:thm:competition:phi}, and $\lambda$ in \eqref{eqn:thm:competition:lambda}, one can verify that
\begin{align*}
F +\gamma\Delta t - \frac{\lambda\varphi}{1-\varphi}
&= c \big\lbrace (\lambda  \varphi \rho -\gamma ) \Delta t (1-\varphi) + \lambda\varphi^2\big\rbrace\\
&= c \big\lbrace\beta^2 \lambda^2 (1-\rho\Delta t)-\beta  (1-\beta_\Sigma  \lambda) (\gamma  \Delta t+\lambda  (2-\rho\Delta t))+(1-\beta_\Sigma  \lambda)^2\big\rbrace\\
&= c \big\lbrace (1-\rho\Delta t) \beta^2 \beta_\Sigma^2 -(2-\rho \Delta  t + \beta_\Sigma  \gamma \Delta t)\Big(\frac{\sigma_K}{\sigma_S}\Big)^2  \beta \beta_\Sigma + \Big(\frac{\sigma_K}{\sigma_S}\Big)^4 (1-\beta \gamma \Delta t)\big\rbrace\\
&=0,
\end{align*}  
where $c$ is a nonzero term that changes from line to line.
\end{proof}

Finally, the following lemma shows that the candidate equilibrium strategies are admissible whenever the inventory management parameter $\varphi$ lies in $(0,2)$.

\begin{lemma}
\label{lem:admissible:general}
Define $M=(M_n)_{n\geq 0}$ by $\Delta M_n = \beta\Delta S_n - \varphi M_{n-1}$ for some $M_0,\beta,\varphi \in\RR$. Then $\Delta M\in \mathcal{A}$ if and only if $\varphi \in (0,2)$.
\end{lemma}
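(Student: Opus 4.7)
The plan is to reduce the lemma to a standard computation for an AR(1) process. Rewriting the recursion as $M_n = (1-\varphi) M_{n-1} + \beta \Delta S_n$, one recognises that $M$ is an autoregressive process with autoregressive coefficient $r := 1-\varphi$ and i.i.d.\ centred innovations $\beta \Delta S_n$ of variance $\beta^2 \sigma_S^2 \Delta t$. Iterating the recursion yields the closed form
\begin{align*}
M_n
&= r^n M_0 + \beta \sum_{k=1}^n r^{n-k} \Delta S_k.
\end{align*}

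Since $\Delta S_1,\ldots,\Delta S_n$ are independent with mean zero and variance $\sigma_S^2 \Delta t$, and independent of $M_0$ (which is deterministic here), taking second moments gives
\begin{align*}
\EE[M_n^2]
&= r^{2n} M_0^2 + \beta^2 \sigma_S^2 \Delta t \sum_{k=0}^{n-1} r^{2k}.
\end{align*}
The question of boundedness in $n$ then reduces to examining the geometric sum $\sum_{k=0}^{n-1} r^{2k}$, according to whether $|r|<1$, $|r|=1$, or $|r|>1$.

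In the case $|r|<1$, i.e., $\varphi \in (0,2)$, the partial sum converges to $1/(1-r^2)$ and $r^{2n} \to 0$, so $\EE[M_n^2]$ is bounded (indeed convergent). Conversely, if $\varphi \notin (0,2)$, then $|r|\geq 1$: for $|r|=1$ (i.e., $\varphi \in \{0,2\}$) the sum equals $n$ and tends to infinity (assuming $\beta \neq 0$; if $\beta = 0$ the equilibrium is degenerate and can be excluded), while for $|r|>1$ the sum grows like $r^{2n}/(r^2-1)$, again diverging. This establishes the ``if and only if'' claim.

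There is no real obstacle; the only point requiring mild care is handling the boundary cases $\varphi \in \{0,2\}$ separately from $\varphi \in (-\infty,0) \cup (2,\infty)$, but both reduce to an elementary inspection of the geometric series. The argument uses only the closed-form expression for $M_n$, independence and Gaussianity (in fact only mean-zero and finite variance) of the increments $\Delta S_n$, and the elementary behaviour of the geometric series $\sum r^{2k}$.
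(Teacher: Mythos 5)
Your proof is correct and follows essentially the same route as the paper's: iterate the recursion to get the explicit AR(1) representation, compute $\EE[M_n^2]$ via independence of the increments, and classify boundedness through the geometric series in $(1-\varphi)^2$. Your explicit aside on the degenerate case $\beta=0$ is a small point of extra care that the paper's proof leaves implicit, but otherwise the two arguments coincide.
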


\begin{proof}
The process $M$ has the explicit representation
\begin{align*}
M_n
&=(1-\varphi)^n M_0 + \beta\sum_{j=0}^{n-1} (1-\varphi)^j\Delta S_{n-j}.
\end{align*}
Since the value increments are i.i.d.~with mean zero and variance $\sigma_S^2\Delta t$, it follows that 
\begin{align*}
\EE[M_n^2]
&=  (1-\varphi)^{2n}M_0^2 + \beta^2 \sigma_S^2 \Delta t \sum_{j=0}^{n-1} (1-\varphi)^{2j}\\
&=  (1-\varphi)^{2n}M_0^2 + \beta^2\sigma_S^2 \Delta t 
\begin{cases}
\frac{1-(1-\varphi)^{2n}}{1-(1-\varphi)^2}, & \vert 1-\varphi \vert \neq 1,\\
n, &\vert 1-\varphi \vert = 1.
\end{cases}
\end{align*}
Thus, $\EE[M_n^2]$ is bounded in $n$ if and only if $\varphi \in (0,2)$.
\end{proof}

\small
\bibliographystyle{abbrv}
\bibliography{mms}

\end{document}